\tikzset{
  treenode/.style = {align=center, inner sep=0pt, text centered,
    font=\sffamily},
  arn_n/.style = {treenode, circle, white, font=\sffamily\bfseries, draw=black,
    fill=black, text width=1.5em},
  arn_r/.style = {treenode, circle, black, draw=black, 
    text width=1.5em, very thick},
  arn_x/.style = {treenode, rectangle, draw=black,
    minimum width=0.5em, minimum height=0.5em}
}
\definecolor{pastelred}{rgb}{1.0, 0.41, 0.38}
\definecolor{pistachio}{rgb}{0.58, 0.77, 0.45}
\newtheorem{observation}{Observation}
\newtheorem{open}{Open Problem}
\newtheorem{conclusion}{Conclusion} 
\authorrunning{G. Istrate, C.  Bonchi\c{s}, and V.  Rochian}
\titlerunning{The language (and series) of Hammersley-type processes}
\begin{document}

\title{\textbf{The language (and series) of Hammersley-type processes}\thanks{This work was supported by a grant of
Ministry of Research and Innovation, CNCS - UEFISCDI, project number
PN-III-P4-ID-PCE-2016-0842, within PNCDI III.}
}

\author{ Cosmin Bonchi\c{s} \inst{1,2} \and Gabriel Istrate\inst{1,2} \and Vlad Rochian \inst{1}}

\institute{Department of Computer Science, West University of Timi\c{s}oara,  Bd. V. P\^{a}rvan 4,\\  Timi\c{s}oara,  Romania. Corresponding author's email: gabrielistrate@acm.org \and e-Austria Research Institute, Bd. V. P\^{a}rvan 4, cam. 045 B, Timi\c{s}oara,  Romania.}

\maketitle
\toctitle{THE LANGUAGE (AND SERIES) OF HAMMERSEY-TYPE PROCESSES}
\authorrunning{G.~Istrate, C.~Bonchi\c{s} and V. Rochian}
\tocauthor{Gabriel~Istrate \and Cosmin~Bonchi\c{s} \and Vlad~Rochian}

\begin{abstract}{\small 
We study languages and formal power series associated to (variants of) the Hammersley process. We show that the ordinary Hammersley process yields a regular language and the Hammersley tree process yields deterministic context-free (but non-regular) languages. For the Hammersley interval process we show that there are {\em two} relevant variants of formal languages. One of them leads to the same language as the ordinary Hammersley tree process. The other one yields non-context-free languages. 

The results are motivated by the problem of studying the analog of the famous Ulam-Hammersley problem for  heapable sequences. Towards this goal we also give an algorithm for computing formal power series associated to the Hammersley process. We employ this algorithm to settle the nature of the scaling constant, conjectured in previous work to be the golden ratio. Our results provide experimental support to this conjecture.}
\end{abstract}

\section{Introduction}

The Physics of Complex Systems and Theoretical Computing have a long and fruitful history of cooperation: for instance the celebrated Ising Model can be studied combinatorially, as some of its versions naturally relate to graph-theoretic concepts \cite{welsh:knots:cup}.  Methods from formal language theory have been employed (even in papers published by physicists, in physics venues) to the analysis of dynamical systems \cite{moore-lakdawala,xie-formal-languages}. 
Sometimes the cross-fertilization goes in the opposite direction: concepts from the theory of \textit{interacting particle systems} \cite{liggett-ips} (e.g. the voter model) have been useful in the analysis of gossiping protocols. A relative of the famous TASEP process, the so-called \textit{Hammersley-Aldous-Diaconis} (HAD) process, has provided \cite{aldous1995hammersley} the most illuminating solution to the famous {\it Ulam-Hammersley problem} \cite{romik2015surprising} concerning the scaling behavior of the longest increasing subsequence of a random permutation. 

In this paper we contribute to the literature on investigating physical models with discrete techniques by bringing methods based on formal language theory (and, possibly, noncommutative formal power series) to the analysis of several variants of the HAD process: We define formal languages (and power series) encoding all possible trajectories of such processes, and completely determine (the complexity of) these languages. 

The main process we are concerned with was defined in combinatorially in \cite{istrate2015heapable}, and in more general form in \cite{basdevant2016hammersley}, where it was dubbed \textit{the Hammersley tree process}. It appeared naturally in \cite{istrate2015heapable} as a tool to investigate a version of the Ulam-Hammersley problem that employs the concept (due to Byers et al. \cite{byers2011heapable})  of {\it heapable sequence}, an interesting variation on the concept of increasing sequence. Informally, a sequence of integers is heapable if it can be successively inserted into the leaves of a (not necessarily complete) binary tree satisfying the heap property. The Ulam-Hammesley problem for heapable sequences is open, the scaling behavior being the subject of an intriguing conjecture (see  Conjecture~\ref{conject} below) involving the golden ratio \cite{istrate2015heapable}. Methods based on formal power series can conceivably rigorously establish the true value of this  constant. We also study a (second) version of the Hammersley tree process, motivated by the analogue of the Ulam-Hammersley problem  for random intervals \cite{balogh2017computing} (see  Conjecture~\ref{conj:int} below).  

The outline of the paper is the following: In section~\ref{sec:first} we precisely specify the the systems we are interested in, and outline the results we obtain. In section~\ref{sec:motiv} we discuss the combinatorial and probability-theoretic motivations of the problems we are  interested in. This section is not needed to understand the technical details of our proofs. In Section~\ref{sec:main} we prove our main result: we precisely identify the Hammersley language for every $k\geq 1$. The language turns out to be regular for $k=1$ and deterministic context-free but non-regular for $k\geq 2$. The result is  then extended to (the analog of) the Hammersley process {\it for intervals}. In this case, it turns that there are \textit{two} natural ways to define the associated formal language. The "effective" version yields the same language as in the case of permutations. The "more useful" one yields (as we show) non-context-free languages that can be explicitly characterized. 
We then proceed by presenting (Section~\ref{sec:pseries}) algorithms for computing the power series associated to these systems. They are applied to the problem of determining true value of scaling constant (believed to be equal to the golden ratio) in the Ulam-Hammersley problem for heapable sequences. In a nutshell, \textbf{the experimental results tend to confirm the identity of this constant to the golden ratio}; however the convergence is slow, as the estimates based on the formal power series computations we undertake (based on small values of $n$) seem quite far from the true value. The paper concludes (Section~\ref{sec:open}) with several discussions and open problems. 


\section{Main definitions and results} 
\label{sec:first} 
We are interested in the following variant of the process in \cite{basdevant2016hammersley}, totally adequate for the purpose of describing the heapability of random permutations, defined as follows: 

\begin{definition} In the process $HAD_{k}$, individuals appear at integer times $t\geq 1$. Each individual can be identified with a \textrm{value} $X_{t}\in {\bf R}$, and is initially endowed with $k$ "lives". The appearance of a new individual $X_{t+1}$ subtracts a life from the smallest individual $X_{a}> X_{t+1}$ (if any) still alive at moment $t$. 
\end{definition} 

We can describe combinatorially the evolution of process $HAD_{k}$ in the following manner: each state of the system at a certain moment $n$ can be encoded by a word of length $n$ over the alphabet $\Sigma_{k}=\{0,1,\ldots, k\}$ obtained by discarding the value information from particles and only record the number of lives. Thus particles are arranged in the increasing order of values, from the smallest to the largest

\begin{example} 
Consider the state $s_{X}$ of the system $HAD_{2}$ after all particles with values $X=[5,1,4,2,3]$ have arrived (in this order). Then in the state $s_{X}$ particle 5 has 0 lives, particle 1 has two lives, particle 4 has 0 lives left, particle 2 has two lives, particle 3 has two lives left. Consequently, the word $w_{X}$ encoding $s_{X}$ is $22200$. 
\label{ex11}
\end{example}

Given this encoding, the dynamics of process $HAD_{k}$ on random permutations can be described in a completely equivalent manner as a process on words: given word $w_{k}$ encoding the state of the system at moment $k$, we choose a random position of $w_{k}$, inserting a $k$ there and subtract one from the first nonzero digit to the right of the insertion place, thus obtaining the word $w_{k+1}$.  This first nonzero digit need not be directly adjacent to the insertion place, but separated from it by a block of zeros. These zeros will not be affected in the word $w_{k+1}$. 
Figure~\ref{fig1} presents the snapshots of all possible trajectories of system $HAD_{2}$ at moments $t=1,2,3.$
 
 \begin{example}
If we run the process $HAD_{2}$ on sequence $X$ from Example~\ref{ex11}, the outcome is a multiset of particles $1,2$ and $3$, each with multiplicity 2, encoded by the word $22200$. 
\end{example}

\begin{figure} 
\begin{center} 
\begin{tikzpicture}[
    scale=0.8, transform shape, thick,
    every node/.style = {draw, circle, minimum size = 10mm},
    grow = down,  
    level 1/.style = {sibling distance=5cm},
    level 2/.style = {sibling distance=1.5cm}, 
    level 3/.style = {sibling distance=1cm}, 
    level distance = 1.25cm
  ]

\node[shape=rectangle] {{\bf 2}}
    child{ node [shape=rectangle] {{\bf 2}{\underline 1}} 
    child{ node [shape=rectangle] {{\bf 2}{\underline 1}1}}
         child{ node [shape=rectangle] {2{\bf 2}{\underline 0}}}
         child{ node [shape=rectangle] {21{\bf 2}}}	
            }                    
    child{ node [shape=rectangle] {2{\bf 2}}
    	child{ node [shape=rectangle] {{\bf 2}{\underline 1}2}}
         child{ node [shape=rectangle] {2{\bf 2}{\underline 1}}}
         child{ node [shape=rectangle] {22{\bf 2}}}
         }
		
; 
\end{tikzpicture}
\end{center} 
\caption{Words in the Hammersley tree process ($k=2$). Insertions are boldfaced. Positions that lost a life at the current stage are underlined.}
\label{fig1} 
\end{figure}

We are interested in the following formal power series that encodes the large-scale evolution of process $HAD_{k}$, and the associated formal language: 

\begin{definition} Given $k\geq 1$, {\rm the Hammersley power series of order $k$} is the formal power series 
  $F_{k}\in {\bf N}(<\Sigma_{k}>)$ defined as follows: given word $w\in \Sigma_{k}^{*}$, define $F_{k}(w)$ to be the multiplicity of word $w$ in the process $HAD_{k}$. 

  The {\rm Hammersley language of order $k$,} $L_{H}^{k}$, is defined as the support of $F_{k}$, i.e. the set of words in $\Sigma_{k}^{*}$ s.t. there exists a {\rm trajectory} of $HAD_{k}$ that yields $w$. 
\end{definition} 

\begin{example}
$F_{2}(212)=2, F_{2}(220)=1$, hence $212,220\in L_{H}^{2}$. On the other hand $200\not\in L_{H}^{2},$ since $F_{2}(200)=0.$
\end{example} 

\begin{definition} For $w\in \Sigma_{k}^*$ and $a\in \Sigma_{k}$, denote by $|w|_a$ the number of copies of a in $w$.  Given $k\geq 1$, word $w\in \Sigma_{k}$ is called {\em $k$-dominant} if the following inequality holds {\bf for every $z\in Pref(w)$}:  
$|z|_{k} - \sum _{i=0}^{k-2} (k-i-1)\cdot |z|_{i}>0.$ 
We call the left-hand side term \textbf{ the structural difference} of word $z$. 
\end{definition} 

\begin{observation}
1-dominant words are precisely those that start with a 1. On the other hand, 2-dominant words are those that start with a 2 and have, in any prefix,  {\em strictly} more twos than zeros.

\end{observation}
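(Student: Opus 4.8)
The plan is to simply unfold the definition of the structural difference in the two special cases $k=1$ and $k=2$ and read off the claim; there is no real machinery involved. One preliminary remark is needed: I would make explicit the convention that $Pref(w)$ ranges over the \emph{nonempty} prefixes of $w$ — otherwise the empty prefix has structural difference $0$, which violates the strict inequality, and no word would be $k$-dominant at all.

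First I would handle $k=1$. Here the alphabet is $\Sigma_1=\{0,1\}$ and the sum $\sum_{i=0}^{k-2}(k-i-1)\,|z|_i=\sum_{i=0}^{-1}(\cdots)$ is empty, so the structural difference of a word $z$ is just $|z|_1$. Hence $w$ is $1$-dominant iff every nonempty prefix of $w$ contains at least one $1$. Applying this to the length-one prefix forces the first letter of $w$ to be a $1$; conversely, if $w$ begins with a $1$, then every nonempty prefix $z$ of $w$ inherits that occurrence, so $|z|_1\geq 1>0$. This proves the first half.

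Next I would handle $k=2$. Here $\Sigma_2=\{0,1,2\}$, and the sum collapses to its single term $i=0$, namely $(2-0-1)\,|z|_0=|z|_0$, so the structural difference of $z$ equals $|z|_2-|z|_0$. Thus $w$ is $2$-dominant iff $|z|_2>|z|_0$ for every nonempty prefix $z$ of $w$, i.e. every nonempty prefix has strictly more $2$'s than $0$'s. Specializing to the length-one prefix, a leading $0$ gives $0>1$ and a leading $1$ gives $0>0$, both false, while a leading $2$ gives $1>0$; so the first letter must be a $2$. Conversely, once every prefix is known to have strictly more $2$'s than $0$'s, the leading-$2$ condition is automatic, so the two descriptions of $2$-dominance coincide.

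I do not anticipate any obstacle; the statement is purely a matter of substituting $k=1$ and $k=2$ into the defining inequality and checking the length-one prefix separately. The only subtlety, as noted, is the nonempty-prefix convention, which I would state once at the outset so that the vacuous/degenerate case does not cause confusion.
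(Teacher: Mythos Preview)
Your proposal is correct and is exactly the intended argument. The paper does not give a separate proof of this observation; it is stated immediately after the definition of $k$-dominance as a direct unpacking of the defining inequality for $k=1,2$, which is precisely what you do (and your remark about $Pref(w)$ containing only nonempty prefixes matches the paper's convention stated in the notation section).
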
 

Our main result completely characterizes the Hammersley language of order $k$: 

\begin{theorem} \label{main}
For every $k\geq 1$, $
L_{H}^{k}=\{ w\in \Sigma_{k}^{*}\mbox{ }|\mbox{ } w\mbox{ is }k\mbox{-dominant.}\}.$
\end{theorem}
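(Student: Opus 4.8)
The plan is to prove both inclusions separately, in each case working directly with the word dynamics of $HAD_k$ described after the definition of the process.

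First, the inclusion $L_H^k \subseteq \{w : w \text{ is } k\text{-dominant}\}$. I would argue that the structural difference is a monovariant. The key observation is: let $d(z) = |z|_k - \sum_{i=0}^{k-2}(k-i-1)|z|_i$ be the structural difference. When we perform one step of the process — insert a $k$ at some position of a word $w$ and decrement the first nonzero digit $j$ to the right of the insertion point, turning it into $j-1$ — I would track how $d$ changes on each prefix of the new word $w'$. A prefix of $w'$ either (a) ends before the inserted $k$, in which case it equals a prefix of $w$ and its structural difference is unchanged; (b) contains the inserted $k$ but ends before the decremented position, in which case its structural difference is that of a prefix of $w$ plus $1$ (from the new $k$); or (c) contains both the inserted $k$ and the decremented digit. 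In case (c) one checks that replacing $j$ by $j-1$ changes $d$ by exactly $+1$ if $j-1 \le k-2$ (since the coefficient of digit $i$ is $k-i-1$, and $(k-(j-1)-1) - (k-j-1) = 1$... wait, we must be careful about whether $j$ or $j-1$ falls in the range $[0,k-2]$), and by $0$ or a suitable amount otherwise; in all cases the change is $\ge 0$, actually one should verify it is $\ge -1+1 = 0$ net together with the $+1$ from the inserted $k$. The upshot is that every prefix's structural difference is nondecreasing along any trajectory, and since the process starts from the single-letter word $k$ (for which every prefix has $d \ge 1 > 0$), positivity on all prefixes is preserved; hence every word in $L_H^k$ is $k$-dominant.

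Second, the reverse inclusion: every $k$-dominant word $w$ is reachable. I would proceed by induction on $|w| = n$. For $n=1$ the only $k$-dominant word is $k$ itself, which is the start state. For the inductive step, given a $k$-dominant word $w$ of length $n \ge 2$, I must exhibit a $k$-dominant word $w'$ of length $n-1$ and a legal move $w' \to w$. The natural candidate is to locate a $k$ in $w$ to "undo": pick an occurrence of $k$ in $w$ such that, reading to its right, the structural-difference slack is large enough that incrementing the first suitable nonzero-after-deletion digit (reversing the decrement) and deleting that $k$ keeps all prefixes positive. The main case analysis is choosing the right occurrence of the inserted letter — I would choose, say, the last position at which the running structural difference attains a value that makes the move reversible, which is where $k$-dominance (strict positivity on all prefixes) gives exactly the room needed. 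One has to check that the digit being incremented was legitimately the "first nonzero digit to the right" in the forward move, i.e. that only zeros lie between the deleted $k$ and that digit; choosing the $k$ immediately to the left of a maximal zero-block preceding the target digit handles this.

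**Expected main obstacle.** The hard part will be the reverse inclusion — specifically, proving that a suitable "reversible" occurrence of $k$ always exists in a $k$-dominant word of length $\ge 2$, and that undoing it leaves the resulting shorter word still $k$-dominant (strictly positive on all prefixes). The forward direction is a clean monovariant argument once the per-prefix bookkeeping is set up, but the backward direction requires pinpointing the right position to delete and then re-verifying the prefix inequalities after both the deletion and the compensating increment; the interplay between the coefficients $k-i-1$ for different digit values $i$, and the edge cases where the digit to be incremented is $0$ or is $k-1$, is where the argument will need the most care. I expect the paper handles this via a carefully chosen extremal position (first/last prefix achieving the minimal positive structural difference) together with a short lemma that such a position exists precisely because of strict positivity.
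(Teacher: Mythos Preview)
Your forward direction is correct and in fact more direct than the paper's. The paper does not argue that one move preserves $k$-dominance on every prefix; instead it proves three separate facts: $L_H^k$ is prefix-closed (via the particle model), every word in $L_H^k$ starts with $k$, and every word in $L_H^k$ has positive structural difference (by a global counting of how many insertions hit a particle with exactly $i$ lives). Combining these gives $k$-dominance. Your per-prefix monovariant reaches the same conclusion in one pass; the computation you were hesitant about does go through: decrementing any nonzero digit $j$ always lowers $d$ by exactly $1$, and the inserted $k$ contributes $+1$, so each prefix of the new word has $d$ at least that of some prefix of the old word.

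For the reverse inclusion your overall plan (induction on length, undo one insertion) is right, but you have not identified which $k$ to delete, and the paper's device is not the ``extremal prefix of minimal structural difference'' you guessed. The paper first reduces to \emph{critical} words, meaning $k$-dominant words whose full-word structural difference equals exactly $1$: any $k$-dominant $z$ becomes critical after appending $(k-2)^{\lambda}$ for suitable $\lambda\ge 0$, and since $L_H^k$ is prefix-closed it suffices to reach critical words. For a critical word the move to undo is forced: delete the \emph{last} occurrence of $k$ and increment the letter immediately to its right by one. Criticality guarantees this right neighbour exists (the last $k$ cannot be the final letter, else the length-$(|z|-1)$ prefix would have structural difference $0$) and is not itself a $k$; after incrementing it is nonzero, so the forward step ``insert $k$, decrement first nonzero to the right'' recovers $z$ exactly, with no zero-block complications. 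The resulting shorter word is again critical, and the induction closes. So the missing idea is this pass to critical words, which collapses the case analysis you were bracing for.
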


\begin{corollary}  Language $L_{H}^{1}$ is regular. For $k\geq 2$ languages $L_{H}^{k}$ are deterministic one-counter languages but not regular. 
\label{complexity} 
\end{corollary}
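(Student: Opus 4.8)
The plan is to read everything off the characterization in Theorem~\ref{main}, reducing the problem to analysing the language of $k$-dominant words. For $k=1$ the sum $\sum_{i=0}^{k-2}(k-i-1)|z|_i$ is empty, so the structural difference of a prefix $z$ is just $|z|_1$; hence (as in the Observation) $L_H^1$ is exactly the set of words beginning with the letter $1$, i.e. $1\{0,1\}^*$. This is accepted by a three-state DFA — an initial state, an absorbing accepting state entered upon reading the first $1$, and a dead state entered if the first letter is $0$ — so $L_H^1$ is regular.

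For $k\geq 2$ I would first exhibit a deterministic one-counter automaton $A_k$ recognising $L_H^k$. The point is that the structural difference $\delta(z)=|z|_k-\sum_{i=0}^{k-2}(k-i-1)|z|_i$ changes by a fixed amount determined solely by the last letter read: $+1$ for the letter $k$, $0$ for the letter $k-1$, and $-(k-i-1)$ for a letter $i\in\{0,\dots,k-2\}$. So $A_k$ maintains the value $\delta(z)-1$ on its counter, so that "counter $\ge 0$" encodes "$\delta(z)\ge 1$", i.e. $k$-dominance of the prefix read so far. On the first letter it moves to a dead state unless that letter is $k$ (for any other first letter $\delta$ of the length-one prefix is $\le 0$), in which case it sets the counter to $0$; thereafter it increments on $k$, leaves the counter fixed on $k-1$, and subtracts $k-i-1$ on letter $i\le k-2$, going to the dead state whenever a subtraction would make the counter negative (equivalently, whenever $\delta$ would drop to $0$). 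Every live state is accepting. Subtractions by a bounded constant $k-i-1$ are either permitted outright in the model or simulated by a gadget of $O(k)$ internal states performing unit decrements, so $A_k$ is deterministic and uses a single counter; by Theorem~\ref{main} it recognises $L_H^k$, which is therefore a deterministic one-counter language.

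For non-regularity when $k\geq 2$, I would intersect with the regular language $k^*0^*$: using Theorem~\ref{main} and the fact that along $k^m0^t$ the structural difference equals $j$ on the prefix $k^j$ and $m-s(k-1)$ on the prefix $k^m0^s$, the binding constraint is $m-t(k-1)\ge 1$, so $L_H^k\cap k^*0^*=\{\,k^m0^t : m\ge 1,\ t(k-1)\le m-1\,\}$. A routine Myhill--Nerode argument then finishes: the words $k^m$, $m\ge 1$, fall into infinitely many equivalence classes, since $k^m$ and $k^{m'}$ are separated by the extension $0^{\lfloor(m-1)/(k-1)\rfloor+1}$ as soon as $\lfloor(m-1)/(k-1)\rfloor<\lfloor(m'-1)/(k-1)\rfloor$, and this floor is non-decreasing and unbounded in $m$. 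Since regular languages are closed under intersection, $L_H^k$ is not regular. I expect the only mildly delicate point to be pinning down the one-counter model precisely enough to accommodate constant-size (rather than unit) counter updates while preserving determinism; everything else is bookkeeping on top of Theorem~\ref{main}.
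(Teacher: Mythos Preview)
Your argument is correct and follows essentially the same route as the paper: the regularity of $L_H^1$ and the deterministic one-counter construction for $k\ge 2$ (maintain the structural difference on the counter, with bounded-size updates simulated by a chain of states) match the paper's proof almost verbatim. The only difference is in the non-regularity step: the paper applies the pumping lemma directly to the words $k^{n(k-1)+1}0^{n}$, whereas you intersect with $k^{*}0^{*}$ and run Myhill--Nerode; both arguments exploit the same family of witnesses and the same inequality $m>t(k-1)$, so this is a cosmetic rather than a substantive divergence.
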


In \cite{balogh2017computing} we considered the extension of heapability to partial orders, including intervals. We also noted that, just as in the case of random permutations, heapability of random intervals can be analyzed using the following version of the process $HAD_{k}$:

\begin{definition} The {\em interval Hammersley process with $k$ lives} is the stochastic process defined as follows: The process starts with no particles.  Particles arrive at integer moments; they have a {\it value} in the interval $(0,1)$, and a number of {\it lives}. Given the state $Z_{n-1}$ of the process after step $n-1$, to obtain $Z_{n}$ we choose, independently, uniformly at random and with repetitions two random reals $X_{n},Y_{n}\in (0,1)$. Then we perform the following operations: 
\begin{itemize} 

\item[-] First a new particle with $k$ lives and value $min(X_{n},Y_{n})$ is inserted. 
\item[-] Then the smallest (if any) live particle whose value is higher than  $max(X_{n},Y_{n})$ loses one life,   yielding state $Z_{n}$. 
\end{itemize} 
The {\em state} of the process at a certain moment $n$ comprises a record of all the real numbers chosen along the trajectory:, $(X_{0},Y_{0}, \ldots, X_{n-1},Y_{n-1})$, even those that do not correspond to a particle. Each number is endowed (in case it represented a new particle) with an integer in the range $0\ldots k$ representing the number of lives the given particle has left at moment $n$. 
\label{ham:int} 
\end{definition}

\noindent Just as with process $HAD_{k}$, we can combinatorialize the previous definition as follows: 

\begin{definition} Process {\em$HAD_{k,INT}$} is the stochastic process on $(\Sigma_{k}\cup \{\diamond\})^{*}$ defined as follows:  The process starts with the two-letter word $Z_{1}=k\diamond$. Given the string representation $Z_{n-1}$ of the process after step $n-1$, we choose, independently, uniformly at random and with repetitions two positions $X_{n},Y_{n}$ into string $Z_{n-1}$. $X_{n},Y_{n}$ may happen to be the same position, in which we also choose randomly an ordering of $X_{n},Y_{n}$.  Then we perform (see Figure~\ref{fig:17}) the following operations: 
\begin{itemize} 
\item First, a $k$ is inserted into $Z_{n-1}$ at position $min(X_{n},Y_{n})$. 
\item Then a $\diamond$ is introduced in position $max(X_{n},Y_{n})$ (immediately after the newly introduced $k$, if $X_{n}=Y_{n}$). 
\item Then the smallest (if any) nonzero digit occurring \textbf{after the position of the newly inserted $\diamond$} loses one unit. This yields string $Z_{n}$. 
\end{itemize} 
\label{hamint}
\end{definition}

\begin{figure}[!ht]
\begin{center} 
		\begin{tikzpicture}[scale=0.8,->,>=stealth',shorten >=1pt,auto,node distance=0.35cm,	thick,main node/.style={rectangle,font=\sffamily\small\bfseries}]
		
		\node[main node] (x) {$X_{n}$};
		\node[main node] (r1-2) [below of = x, node distance = 1cm] {$\cdot$};
		\node[main node] (r1-1) [left = of r1-2] {$2$};
		\node[main node] (r1-0) [left = of r1-1] {$\cdot$};	
		\node[main node] (r1-0-0) [left = of r1-0] {$Z_{n-1}=$};	
		\node[main node] (r1-3)  [right = of r1-2] {$2$};
		\node[main node] (r1-4) [right = of r1-3] {$\cdot$};	
		\node[main node] (r1-5) [right = of r1-4] {$0$};
		\node[main node] (r1-6) [right = of r1-5] {$\cdot$};		
		\node[main node] (r1-7) [right = of r1-6] {$2$};
		\node[main node] (r1-8) [right = of r1-7] {$\cdot$};
		\node[main node] (r1-9) [right = of r1-8] {$0$};
		\node[main node] (r1-10) [right = of r1-9] {$\cdot$};		
		
		\node[main node] (x1) [right = of x] {$$};
		\node[main node] (x2) [right = of x1] {$$};
\node[main node] (x3) [right = of x2] {$~~~$};		
		\node[main node] (y) [right = of x3] {$Y_{n}$};

		\node[main node] (min) [below of = r1-2, node distance = 1cm] {$min\{X_{n},Y_{n}\}$};
		\node[main node] (max) [below of = r1-6, node distance = 1cm] {$max\{X_{n},Y_{n}\}$};
		\node[main node] (r1-4mid) [below of = r1-4, node distance = 1cm] {};
		
		\node[main node] (r2-2) [below of = min, node distance = 1cm] {$\mathbf{2}$};
		\node[main node] (r2-1) [left = of r2-2] {$2$};
		\node[main node] (r2-0) [left = of r2-1] {$\cdot$};	
		\node[main node] (r2-0-0) [left = of r2-0] {$Z_{n}=$};			
		\node[main node] (r2-3)  [right = of r2-2] {$2$};
		\node[main node] (r2-4) [right = of r2-3] {$\cdot$};	
		\node[main node] (r2-5) [right = of r2-4] {$0$};
		\node[main node] (r2-6) [right = of r2-5] {$\diamond$};		
		\node[main node] (r2-7) [right = of r2-6] {$\mathbf{1}$};
		\node[main node] (r2-8) [right = of r2-7] {$\cdot$};
		\node[main node] (r2-9) [right = of r2-8] {$0$};
		\node[main node] (r2-10) [right = of r2-9] {$\cdot$};			
		
		\draw[->] (x) -- (r1-2);
		\draw[->] (y) -- (r1-6);
		\draw[->] (min) -- (r2-2);
		\draw[->] (max) -- (r2-6);
		
		\draw[double, ->] (r1-4) --(r1-4mid);
		
		
		\end{tikzpicture}
\end{center} 
	\caption{Insertion in process $HAD_{2,INT}$. Insertion positions are marked with a dot. Positions affected by the insertion are in bold.}
	\label{fig:17}
	
\end{figure}

It turns out (see the discussion at the end of  Section~\ref{sec:motiv}) that there are \textit{two} languages meaningfully associated to the process $HAD_{k,INT}$. The first of them has the following definition: 

\begin{definition} 
Denote by $L_{H, INT}^{k}$, called {\rm the language of the interval Hammersley process}, the set of words (over alphabet $\Sigma_{k}\cup \{\diamond \}$) generated by the  process $HAD_{k,INT}$. 
\label{lang:ham:int}
\end{definition} 

\noindent The second language associated to the interval Hammersley process is defined as follows: 

\begin{definition} 
 The {\rm effective language of the interval Hammersley process}, $L_{H, INT}^{k,\mbox{eff}}$, is the set of strings in $\Sigma_{k}^{*}$ obtained by deleting all diamonds from some string in $L_{H,INT}^{k}$. 
 \label{lang:2}
\end{definition}

Despite the fact that the dynamics of  process $HAD_{k,INT}$ is quite different from that of the ordinary process $HAD_{k}$ (a fact that is reflected in the coefficients of the two power series), and the conjectured scaling behavior is not at all similar (for $k\geq 2$), our next result shows that this difference {\bf is not visible} on the actual trajectories: the effective language of the Interval Hammersley process coincides with that of the "ordinary" Hammersley tree process. Indeed, we have:  

\begin{theorem} 
For every $k\geq 1$, $L_{H, INT}^{k,\mbox{eff}}=L_{H}^{k}=\{ w\in \Sigma_{k}^{*}\mbox{ }|\mbox{ } w\mbox{ is }k\mbox{-dominant.}\}.$
\label{foo2}
\end{theorem}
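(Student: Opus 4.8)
We prove the theorem by combining Theorem~\ref{main} with two inclusions. Since Theorem~\ref{main} already identifies $L_{H}^{k}$ with the set of $k$-dominant words, it suffices to establish
\[
L_{H}^{k}\ \subseteq\ L_{H,INT}^{k,\mathrm{eff}}\ \subseteq\ \{w\in\Sigma_{k}^{*}\mid w\text{ is }k\text{-dominant}\},
\]
as the outer terms coincide by Theorem~\ref{main} and thus force all three sets to be equal. For the first inclusion, given a $k$-dominant $w$, Theorem~\ref{main} supplies a trajectory $w_{1}\to\cdots\to w_{n}=w$ of $HAD_{k}$, and the plan is to simulate it inside $HAD_{k,INT}$ by always choosing $X_{m}=Y_{m}$, maintaining the invariant that $Z_{m}$ with all diamonds deleted equals $w_{m}$. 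Concretely, if the $HAD_{k}$ step inserts a $k$ into $w_{m-1}$ at the gap following its $g$-th digit and then decrements the first nonzero digit to the right, one picks $X_{m}=Y_{m}$ to be a gap of $Z_{m-1}$ having exactly $g$ digits to its left (such a gap exists because $Z_{m-1}$ begins with a digit). The rule then inserts the new $k$ there, places a $\diamond$ immediately after it, and decrements the first nonzero digit \emph{after} that $\diamond$; since diamonds are not digits, this is exactly the first nonzero digit after the new $k$, i.e.\ the same digit $HAD_{k}$ decremented. Deleting diamonds reproduces $w_{m}$, and the base case $Z_{1}=k\diamond$ deletes to $w_{1}=k$, so $w\in L_{H,INT}^{k,\mathrm{eff}}$.

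For the second inclusion I would argue via a structural-difference invariant. Extend the structural difference $\mathrm{sd}(z)$ to $(\Sigma_{k}\cup\{\diamond\})^{*}$ by giving $\diamond$ weight $0$; then $\mathrm{sd}$ is additive over positions, inserting a fresh $k$ raises it by $1$, and a one-line case check over digit values $1,\dots,k$ (uniform in $k$, including $k=1$) shows that a digit losing a life lowers it by exactly $1$. I then claim, by induction on $n$, that in $Z_{n}$ \emph{every prefix containing at least one digit has $\mathrm{sd}\geq 1$}. The base case $Z_{1}=k\diamond$ is immediate. For the inductive step, note the crucial feature of the interval dynamics: the decremented digit lies strictly to the right of the newly inserted $k$, because a $\diamond$ is inserted between them and the decrement occurs after that $\diamond$. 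Hence a prefix $y'$ of $Z_{n}$ containing a digit falls into one of three cases. If $y'$ ends before the new $k$, it is unchanged from $Z_{n-1}$ and the hypothesis applies. If $y'$ ends at or past the new $k$ but strictly before the decrement, then $\mathrm{sd}(y')=\mathrm{sd}(\tilde y)+1$ for the corresponding prefix $\tilde y$ of $Z_{n-1}$, which is either empty (contributing $0$) or nonempty (hence contains a digit and contributes $\geq1$); either way $\mathrm{sd}(y')\geq1$. If $y'$ reaches the decrement, the $+1$ from the new $k$ and the $-1$ from the lost life cancel, so $\mathrm{sd}(y')=\mathrm{sd}(\tilde y)\geq1$ since $\tilde y$ necessarily contains the (nonzero) decremented digit and is nonempty. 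In all cases $\mathrm{sd}(y')\geq1>0$. Deleting diamonds preserves structural differences prefix by prefix, so the diamond-free word is $k$-dominant, completing the chain of inclusions.

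The only genuinely delicate point is the choice of induction hypothesis for the second inclusion: it must be stated for prefixes that \emph{contain a digit} (so that the empty prefix, of structural difference $0$, does not spoil the count when a new $k$ is inserted at the very front), and it must exploit the fact — specific to the interval process — that the interposed $\diamond$ forces the life loss strictly to the right of the insertion, which is precisely what makes the ``$+1$ before $-1$'' bookkeeping valid for every prefix. The rest (the per-digit computation that a lost life costs exactly one unit of structural difference, and the routine gap-counting in the simulation) is mechanical.
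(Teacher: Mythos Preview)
Your proof is correct. The first inclusion (simulating $HAD_{k}$ inside $HAD_{k,INT}$ by taking $X_{m}=Y_{m}$) is exactly the paper's argument. For the reverse inclusion you diverge: the paper introduces the notion of a \emph{left translate} (moving a $k$ towards the front of a word), proves that $L_{H}^{k}$ is closed under left translates via the structural-difference characterization, and then observes that each interval step produces a left translate of the word the ordinary process would have produced. You instead track the structural difference directly through the interval dynamics, carrying the invariant ``every prefix containing a digit has structural difference $\geq 1$'' by induction on the trajectory.

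Both arguments rest on the same underlying bookkeeping (a new $k$ adds $+1$, a lost life costs exactly $-1$, and the loss occurs strictly to the right of the insertion). The paper's route isolates a reusable lemma (closure under left translates) and makes explicit the conceptual point that the interval process differs from the ordinary one only by shifting the inserted $k$ leftward. Your route is more self-contained and avoids the auxiliary notion altogether. One small remark: your parenthetical ``nonempty (hence contains a digit)'' in Case~2 quietly uses that every $Z_{m}$ begins with a digit; this is true (the leftmost inserted symbol at each step is a $k$), but it would be worth stating, since your carefully chosen induction hypothesis is otherwise agnostic about whether nonempty prefixes must contain digits.
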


The previous result contrasts with our next theorem: 

\begin{theorem}
For $k\geq 1$ the language $L_{H,INT}^{k}$ is {\bf not} context-free. 
\label{noncfl} 
\end{theorem}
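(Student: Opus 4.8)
The plan is to show that $L_{H,INT}^{k}$ violates the pumping lemma for context-free languages (or, more robustly, Ogden's lemma), by exhibiting a family of witness words whose structure forces any decomposition into pumpable pieces to break membership. The key observation is that, unlike the effective language (where diamonds are erased and only the $k$-dominance condition on $\Sigma_k$-letters survives), in $L_{H,INT}^{k}$ the diamonds are \emph{recorded} in the word, and their positions encode nontrivial combinatorial constraints linking the $\diamond$'s to the digits. Concretely, in a trajectory of $HAD_{k,INT}$ every $\diamond$ is introduced together with a $k$ that is inserted somewhere to its left, and every life lost is charged to a nonzero digit strictly to the right of some $\diamond$. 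So a word in $L_{H,INT}^{k}$ must satisfy a \emph{balance} condition: in any prefix, the number of $\diamond$'s cannot exceed the number of $k$'s (each $\diamond$ comes paired with a $k$ already placed), \emph{and} — this is the part that escapes context-freeness — the total "life deficit" (the quantity $\sum_{i=0}^{k-2}(k-i-1)|z|_i$ measuring digits that have been decremented) is bounded by a function of how many $\diamond$'s lie to the left of the decremented positions. The first step is to make this precise: characterize $L_{H,INT}^{k}$, or at least a necessary condition for membership, by a system of (at least) two independent counting inequalities that a finite-state or one-counter device cannot simultaneously track.

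Next I would pick the pumping witness. A natural candidate, in the spirit of the classical $\{a^n b^n c^n\}$ obstruction, is a word of the form $w_n = k^{\,n}\, \diamond^{\,n}\, u^{\,n}$ (for $k\ge 2$; for $k=1$ an analogous word with $0$'s playing the role of $u$), where $u$ is a short block that is legal only when "paid for" by both a preceding $k$ and a preceding $\diamond$ — e.g.\ $u$ forces consumption of a life from a digit that must sit to the right of a diamond. One then argues: (i) $w_n \in L_{H,INT}^{k}$, by explicitly describing a trajectory that produces it (insert all the $k$'s first, then interleave the $\diamond$'s, then perform the decrements); and (ii) any CFL-decomposition $w_n = \alpha\beta\gamma\delta\varepsilon$ with $|\beta\delta|\ge 1$, $|\beta\gamma\delta|\le p$ fails, because the bounded-length window $\beta\gamma\delta$ can straddle at most two of the three blocks $k^n$, $\diamond^n$, $u^n$, so pumping changes the count of at most two of them and destroys one of the balance inequalities. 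Using Ogden's lemma with marked positions in the middle block makes the "straddle at most two" step clean.

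The main obstacle I anticipate is step (i) together with the precise formulation of the necessary condition in the first paragraph: one must verify carefully that the candidate word really is reachable by a legal trajectory of $HAD_{k,INT}$ (the dynamics, with its "insert $k$ at $\min$, $\diamond$ at $\max$, decrement first nonzero digit after the $\diamond$" rule, is fiddly, and the order of operations matters), and simultaneously that the two counting inequalities are genuinely \emph{necessary} — i.e.\ that no sneaky trajectory can produce a word violating them. Showing necessity amounts to a monovariant/invariant argument: track, along any trajectory, the relevant potential (number of $\diamond$'s minus number of $k$'s in a prefix, and the life-deficit versus diamonds-to-the-left), and check each of the three elementary operations preserves the invariant. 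Once both inequalities are nailed down as necessary conditions, the pumping argument is routine: whichever block gets pumped, one of the two prefix inequalities is violated at the boundary, contradicting membership. I would present the $k\ge 2$ case in detail and remark that $k=1$ is handled identically with $u = 0$ (so $w_n = 1^n \diamond^n 0^n$), since a $0$ after a diamond is exactly what cannot be created without a matching life to spend.
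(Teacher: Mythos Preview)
Your general strategy --- establish balance conditions, then pump a multi-block witness --- is the right instinct, and it is essentially what the paper does. But your concrete witness fails at step (i) for a simple structural reason you have overlooked: every step of $HAD_{k,INT}$ inserts exactly one $k$ and exactly one $\diamond$, so \emph{every} word in $L_{H,INT}^{k}$ has even length and exactly half of its letters equal to $\diamond$. Your candidate $w_n = k^n \diamond^n u^n$ (and in particular $1^n\diamond^n 0^n$ for $k=1$) has length $3n$ with only $n$ diamonds, hence is never in the language for $n\geq 1$. No trajectory can produce it, so the pumping argument never gets off the ground.

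The fix is to use a four-block word that respects the diamond count, which is exactly what the paper does: it intersects $L_{H,INT}^{k}$ with the regular language $\{k\}^{*}\diamond^{*}\{k-1\}^{*}\diamond^{*}$, proves the intersection is precisely $\{k^{c+d+e}\diamond^{c+e}(k-1)^{c}\diamond^{c+d}: c,d,e\geq 0\}$ (note this has $2c+d+e$ diamonds out of $4c+2d+2e$ letters, as required), and then applies Ogden's lemma to that language by marking the $(k-1)$-block. The intersection-with-regular trick buys you a clean, exactly-characterized target, so you do not need to juggle the two prefix inequalities directly during the pumping step. Your plan to derive the necessary balance conditions is essentially the content of the paper's Theorem characterizing $L_{H,INT}^{k}$ (conditions $|p|_{\diamond}\leq |p|/2$ and $s(p)+(k+1)|p|_{\diamond}\geq k|p|$ for every prefix $p$), and those are indeed what make the Ogden argument go through --- but you must first pick a witness that is actually in the language.
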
 

In fact we can give a complete characterization of $L_{H,INT}^{k}$ similar in spirit to the one given for language $L_{H}^{k}$ in Theorem~\ref{main}: 

\begin{theorem} Given $k\geq 1$, the language $L_{H,INT}^{k}$ is the set of words $w$ over alphabet $\Sigma_{k}\cup \{ \diamond\}$ that satisfy the following conditions: 
\begin{enumerate} 
\item $|w|_{\diamond}=|w|/2$. In particular $|w|$ must be even. 
\item For every prefix $p$ of $w$, (a).   $|p|_{\diamond}\leq |p|/2$ and (b). $s(p) + (k+1)|p|_{\diamond}\geq k|p|.$ 
\end{enumerate} 
\label{foo3} 
\end{theorem}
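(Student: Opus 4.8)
We prove the theorem by establishing that $L_{H,INT}^{k}$ equals the set $S_{k}$ of words satisfying conditions 1 and 2. It is convenient to first rewrite the two prefix inequalities. Writing $d(p)=|p|-|p|_{\diamond}$ for the number of digit symbols of $p$, condition 2(a) is exactly $|p|_{\diamond}\le d(p)$, while (using $|p|=d(p)+|p|_{\diamond}$) condition 2(b) is equivalent to $k\cdot d(p)-s(p)\le |p|_{\diamond}$, i.e.
$$\sum_{\text{digit positions of }p}\bigl(k-(\text{value there})\bigr)\;\le\;|p|_{\diamond},$$
the left side being the total number of \emph{lives lost} among the digits recorded inside $p$ (here $s(p)$ is the sum of the digit values of $p$, a $\diamond$ counting $0$). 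Condition 1 is the equality $|w|_{\diamond}=d(w)$.

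\emph{Necessity} ($L_{H,INT}^{k}\subseteq S_{k}$). The plan is to check that all three conditions are invariants of $HAD_{k,INT}$, the only structural fact needed being: an insertion step never reorders the symbols already present; the $\diamond$ created in a step is inserted strictly to the right of the $k$ created in the same step; and the digit that loses a life in that step lies strictly to the right of that $\diamond$. Two "matchings" then finish the argument. First, sending each $\diamond$ to the $k$ born in the same step is an injection into the digits lying to its left, so restricting to a prefix $p$ gives $|p|_{\diamond}\le d(p)$; for $p=w$ it is even a bijection (one $k$ and one $\diamond$ per step, digits never disappearing), giving condition 1. Second, sending each recorded unit of lost life on a digit $x$ to the $\diamond$ that caused it is an injection, since each $\diamond$ triggers at most one life loss ever, and that $\diamond$ lies to the left of $x$; restricting to a prefix $p$ yields condition 2(b).

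\emph{Sufficiency} ($S_{k}\subseteq L_{H,INT}^{k}$), by induction on $n=|w|/2$. For $n=1$ the conditions force $w=k\diamond=Z_{1}$. For $n\ge 2$ I will exhibit a word $w'\in S_{k}$ of length $2n-2$ together with a single legal $HAD_{k,INT}$-step $w'\to w$, and then apply the inductive hypothesis. The key tool is a \emph{Strictness Lemma}: for any $v\in S_{k}$, every prefix $p$ of $v$ that ends in a digit satisfies $|p|_{\diamond}<d(p)$ --- otherwise deleting that last digit produces a prefix violating 2(a). To locate the predecessor, note that $w_{1}=k$ (apply 2(a) and 2(b) to the length-one prefix), and build a chain $i_{1}<i_{2}<\cdots$ of positions carrying a \emph{full} $k$: set $i_{1}=1$; given $i_{s}$, let $j(i_{s})$ be the first $\diamond$ to its right, and let $i_{s+1}$ be the first digit position to the right of $j(i_{s})$ --- continuing only while that position exists and carries a full $k$. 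The chain is strictly increasing, hence finite. It cannot stop because "$i_{t}$ has no $\diamond$ to its right": then all $n$ diamonds would lie in the prefix ending at $i_{t}$, forcing $2n\le i_{t}\le|w|=2n$, i.e. 2(a) would be tight at a prefix ending in the digit $w_{i_{t}}$, contradicting the Strictness Lemma. So the chain stops at some $i_{t}$ with $j:=j(i_{t})$ defined and either (A) no digit occurs after position $j$, or (B) the first digit $q$ after $j$ has value $<k$. Define $w'$ by deleting positions $i_{t}$ and $j$ from $w$ and, in case (B), increasing the value at $q$ by one. Then $w'\to w$ is a legal step: insert a $k$ at the gap of $i_{t}$ and a $\diamond$ at the gap of $j$ (legal since $i_{t}<j$); in case (B) the forced life-loss falls exactly on $q$, since everything strictly between $j$ and $q$ consists of $\diamond$'s and $w'_{q}=w_{q}+1\ge 1$; in case (A) no nonzero digit lies to the right of the inserted $\diamond$, so no life-loss occurs.

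It remains to verify $w'\in S_{k}$, which I expect to be routine: condition 1 is immediate (one fewer digit, one fewer $\diamond$), and condition 2 is checked prefix by prefix, splitting at the marked positions $i_{t}<j<(q)$. Prefixes ending before $i_{t}$ are unchanged; prefixes ending strictly between $i_{t}$ and $j$ only shed the full $k$ at $i_{t}$, so the lost-life count is unchanged and 2(a) survives because such prefixes end in a digit (Strictness Lemma); prefixes ending strictly between $j$ and $q$ additionally shed the $\diamond$ at $j$, but they end in a $\diamond$ and carry strictly more $\diamond$'s than the prefix ending at $j$, which keeps 2(b) slack; prefixes ending at or past $q$ shed the full $k$ and the $\diamond$ and gain one unit at $q$, so both the lost-life count and $|p|_{\diamond}$ drop by one, preserving 2(b), while 2(a) is unaffected. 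This closes the induction. I expect the only genuinely delicate point to be the \emph{existence of a valid predecessor}: the naive choice "delete the first $\diamond$ together with the $k$ immediately preceding it" fails precisely when the digit just after that $\diamond$ is itself a full $k$ (which cannot be "un-decremented"), and the chain construction is exactly what escapes this trap --- the place where both prefix inequalities, through the Strictness Lemma, are indispensable. (As a byproduct, this characterization refines Theorem~\ref{foo2}, since $L_{H,INT}^{k,\mathrm{eff}}$ is the diamond-deletion image of $L_{H,INT}^{k}$.)
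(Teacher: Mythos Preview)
Your proof is correct. The necessity direction is essentially the same counting argument as the paper's: each $\diamond$ is matched to a digit born earlier in the same step (giving 2(a) and condition 1), and each lost life is matched to the $\diamond$ that caused it, lying to its left (giving 2(b)).

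For sufficiency the two proofs diverge in how they pick the predecessor. The paper simply takes $r$ to be the position of the \emph{last} $k$ in $w$, then the first $\diamond$ after it, then the first non-$\diamond$ after that. Because $r$ is the last $k$, any digit appearing afterward is automatically $<k$, so it can be ``un-decremented''; no chain is needed, and the case analysis verifying that the shortened word still satisfies conditions 1--2 is the bulk of the work. You instead build a forward chain $i_{1}<i_{2}<\cdots$ of full $k$'s starting at position $1$, hopping past the next $\diamond$ and then to the next digit, stopping only when that digit is $<k$ or absent; your Strictness Lemma is what guarantees the chain does not run off the end of the word. Both constructions produce a valid one-step predecessor (they can even yield different predecessors for the same $w$), and your prefix-by-prefix verification that $w'\in S_{k}$ goes through exactly as you sketch --- the only nontrivial case being prefixes ending strictly between $j$ and $q$, where comparing to the prefix of $w$ ending at $j$ gives the extra unit of slack needed in 2(b). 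The paper's ``last $k$'' trick is the slicker route, since it sidesteps the chain construction and the termination argument entirely, but what you wrote is a complete and correct alternative.
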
 

Finally, we return to the power series perspective on the Ulam-Hammersley problem for heapable sequences. We outline a simple algorithm (based on dynamic programming) for computing the coefficients of the Hammersley power series $F_{k}$. 

{\small
\begin{figure}[h]
\begin{center}
	\fbox{
	    \parbox{11cm}{

\textbf{Input:} $k\geq 1,w\in \Sigma_{k}^{*}$ \\
\textbf{Output:} $F_{k}(w)$
\vspace{-2mm}
\begin{itemize} 
\item[] $S:= 0$. $w=w_{1}w_{2}\ldots w_{n}$
\item[] \textbf{if }$w\not \in L_{H}^{k}$
\begin{itemize} 
   \item[]\textbf{return }0 
\end{itemize} 
\textbf{if } $w== `k`$
   \begin{itemize} 
   \item[] \textbf{return }1 
   \end{itemize} 
\textbf{ for } $i\mbox{ in 1:n-1}$
   \begin{itemize} 
       \item[] \textbf{if } $w_{i}==k\mbox{ and } w_{i+1} \neq k$
        \begin{itemize}
               \item[] \textbf{let }$r=min\{l\geq 1: w_{i+l}\neq 0\mbox{ or }i+l=n+1\}$
               \item[] \textbf{for }$j\mbox{ in 1:r-1}$
               \begin{itemize} 
                       \item[] \textbf{let }$z=w_{1}\ldots w_{i-1}w_{i+1}\ldots w_{i+j-1}1w_{i+j+1}\ldots         w_{i+r}\ldots w_{n}$ 
                       \item[] $S := S +  ComputeMultiplicity(k,z)$ 
              \end{itemize} 
              \item[] \textbf{if }$i+r\neq n+1$ and $w_{i+r}\neq k$
              \begin{itemize} 
                      \item[]\textbf{let }$z=w_{1}\ldots w_{i-1}w_{i+1}\ldots w_{i+r-1} (w_{i+r}+1)w_{i+r+1}\ldots w_{n}$  
                       \item[] $S := S +  ComputeMultiplicity(k,z)$
             \end{itemize} 
       \end{itemize} 
\end{itemize} 
\begin{itemize} 
  \item[]\textbf{if }$w_{n}== k$
  \begin{itemize}
      \item[] \textbf{let }$Z=w_{1}\ldots \ldots w_{n-1}$  
      \item[] $S := S +  ComputeMultiplicity(k,z)$ 
  \end{itemize} 
\end{itemize} 
\item[] \textbf{return }S
\end{itemize} 
}
}
\end{center}
\vspace{-0.5cm}
\caption{Algorithm ComputeMultiplicity(k,w)} 
\label{algo} 
\end{figure}}

\begin{theorem}
Algorithm ComputeMultiplicity correctly computes series $F_{k}$. 
\label{foo4} 
\end{theorem}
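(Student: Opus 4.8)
The plan is to prove correctness by strong induction on the length $n = |w|$, establishing that $\texttt{ComputeMultiplicity}(k,w)$ returns $F_k(w)$ for every $w \in \Sigma_k^*$. The conceptual heart of the argument is that the recursion computes $F_k(w)$ by summing $F_k(z)$ over all \emph{immediate predecessors} $z$ of $w$ in the process $HAD_k$: since each trajectory reaching $w$ at time $n$ passes through a unique word of length $n-1$ at time $n-1$, and distinct predecessors contribute disjoint sets of trajectories, we have $F_k(w) = \sum_{z \to w} F_k(z)$, where the sum ranges over words $z$ of length $n-1$ from which a single step of $HAD_k$ can produce $w$. So the real content is a bijection-style lemma: \emph{the words $z$ enumerated by the loop in Figure~\ref{algo}, counted with the multiplicity with which the algorithm visits them, are exactly the predecessors of $w$, each counted by the number of distinct insertion positions that transform it into $w$.}

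First I would dispose of the base cases and guards. If $w \notin L_H^k$ then by Theorem~\ref{main} no trajectory yields $w$, so $F_k(w)=0$, matching the first \texttt{return}. If $w = \text{`}k\text{'}$, the only trajectory of length $1$ produces exactly this word, so $F_k(w)=1$. For the inductive step, recall that one step of $HAD_k$ chooses an insertion position, writes a $k$ there, and decrements the first nonzero digit to its right (if any), leaving any intervening block of zeros untouched. I would \emph{invert} this move. A predecessor $z$ together with the insertion site is recovered from $w$ by: (i) locating in $w$ a position $i$ holding the freshly inserted $k$; (ii) deleting that $k$; and (iii) \emph{undoing} the decrement. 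Case~(a): the decrement hit some digit at distance $j\in\{1,\dots,r-1\}$ that was a zero in $w$ but a $1$ in $z$ — this is exactly the inner \texttt{for j in 1:r-1} branch, which reconstructs $z$ by deleting $w_i$ and changing the $j$-th zero after it to a $1$. Case~(b): the decrement hit the first nonzero digit $w_{i+r}$ after the run of zeros, which in $z$ was $w_{i+r}+1$ — this is the \texttt{if i+r != n+1 and w\_{i+r} != k} branch (the side condition $w_{i+r}\neq k$ is needed because a digit of value $k$ in $z$ cannot be the target of a decrement, as it would have been $k+1 \notin \Sigma_k$). Case~(c): the decrement hit nothing (no nonzero digit to the right of the insertion), which forces the inserted $k$ to be the \emph{last} letter of $w$; this is the trailing \texttt{if w\_n == k} branch, which simply strips the final $k$. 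The guard \texttt{w\_i == k and w\_{i+1} != k} in the main loop avoids double-counting: if a maximal run of $k$'s sits at positions $i,\dots,i+\ell$, the newly inserted $k$ could be any of them, but all these choices yield the \emph{same} predecessor with the \emph{same} induced decrement, so the run should be charged exactly once — namely at its rightmost position, which is the unique $i$ in the run with $w_{i+1}\neq k$ (and the trailing branch handles the case where the run ends the word).

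The main obstacle — and the step deserving the most care — is precisely this \emph{no-double-counting / no-undercounting} bookkeeping, i.e. showing the correspondence between (predecessor $z$, insertion position) pairs and iterations of the algorithm is a genuine bijection. Two subtleties must be checked. (1) Different choices of which $k$ in a run of $k$'s was "the new one" give identical $(z,\text{effect})$ pairs, so the guard collapsing to the run's right end is both necessary and sufficient; one must verify no legitimate predecessor is lost this way, which follows because in $z$ the decremented position (if any) lies strictly to the right of the whole run, so shifting the interpretation of "which $k$ is new" within the run does not change $z$ at all. (2) Each reconstructed $z$ is automatically $k$-dominant, hence by Theorem~\ref{main} lies in $L_H^k$ and is a legitimate state — but this need not be argued separately, since the recursive call on a non-dominant $z$ simply returns $0$ via the first guard, so spurious reconstructions (if any) contribute nothing; still, I would note that by Theorem~\ref{main} every genuine predecessor \emph{is} dominant, so no genuine contribution is silently zeroed out. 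Once the bijection is established, $\sum_{z\to w} F_k(z)$ equals the sum accumulated in $S$ by the induction hypothesis applied to each $z$ (all of length $n-1 < n$), and the theorem follows. Termination is immediate since every recursive call is on a strictly shorter word.
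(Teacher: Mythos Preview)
Your overall strategy matches the paper's: invert one step of $HAD_{k}$, enumerate all immediate predecessors $z$ of $w$, and sum $F_{k}(z)$ by induction on $|w|$. Two points need correction, one minor and one substantive.

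The minor one: your explanation of the guard $w_{i+1}\neq k$ is backwards. You claim that within a maximal run of $k$'s the freshly inserted $k$ ``could be any of them,'' all collapsing to the same predecessor, so the guard merely avoids double-counting. In fact the letter immediately to the right of a newly inserted $k$ is \emph{never} $k$ --- it was either decremented (hence $<k$), was a skipped $0$, or does not exist --- so only the rightmost $k$ of a run is ever a candidate. The guard excludes impossible positions, not redundant ones; this is exactly what the paper says (``the new $k$ will be the rightmost element of a maximal block of $w$ of consecutive $k$'s''). Your conclusion survives, but the justification does not.

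The substantive one is a genuine gap. In Case~(c) you assert that ``no decrement'' forces the inserted $k$ to be the last letter $w_{n}$. That is false. If the inserted $k$ sits at some position $i<n$ with $w_{i+1}=\cdots=w_{n}=0$, then there is likewise no nonzero digit to its right, no decrement occurs, and the predecessor is $z=w_{1}\cdots w_{i-1}\,0^{\,n-i}$ (delete $w_{i}$, undo \emph{nothing}). Concretely, for $k=2$ and $w=2220$: take $z=220\in L_{H}^{2}$ and insert a $2$ at gap~$2$, obtaining $22[2]0=2220$ with no decrement. Your three cases (and the algorithm's loop at $i=3$, $r=2$) recover only the predecessor $221$ here, yet the paper's own table gives $F_{2}(2220)=2=F_{2}(220)+F_{2}(221)$. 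So the bijection you claim between predecessor pairs $(z,p)$ and iterations of the loop is incomplete at this boundary, and the inductive identity $F_{k}(w)=\sum_{z\to w}F_{k}(z)$ is not established by your case split as written.
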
 

We defer the presentation of the application of this result to Section~\ref{sec:apps}. 
\section{Motivation and notations}
\label{sec:motiv}

Define $\Sigma_{\infty}=\cup_{k\geq 1} \Sigma_{k}$. Given $x,y$ over $\Sigma_{\infty}$ we use notation $x\sqsubseteq y$ to denote the fact that $x$ is a prefix of $y$. The set of (non-empty) prefixes of $x$ is be denoted by $Pref(x)$.

A {\it $k$-ary (max)-heap} is a $k$-ary tree, non necessary complete, whose nodes have labels $t[\cdot]$ respecting the min-heap condition $t[{parent(x)]} \geq t[x].$
Let $rgeq 1$, and let $a_{1},b_{1},\ldots, a_{r}>0$ and $b_{r}\geq 0$ be integers. We will use notation $[a_{0},b_{0},\ldots, a_{t},b_{t}]$ as a shorthand for the word $1^{a_{0}}0^{b_{0}}\ldots 1^{a_{r}}0^{b_r}\in \Sigma_{1}^{*}$ (where $0^0=\epsilon$, the null word). 

The following combinatorial concept was introduced (for $k=2$) in \cite{byers2011heapable} and further studied in \cite{istrate2015heapable,heapability-thesis,istrate2016heapability,basdevant2016hammersley,basdevant2017almost,balogh2017computing}: 

\begin{definition} 
  A  sequence $X=X_{0},\ldots, X_{n-1}$ is {\it max $k$-heapable} if there exists
  some $k$-ary tree $T$ with nodes labeled by (exactly one of) the elements of $X$,
  such that for every non-root node $X_{i}$ and parent $X_{j}$, $X_{j}\geq X_{i}$
  and $j<i$. In particular a 2-heapable sequence will simply be called
  {\it heapable}  \cite{byers2011heapable}. Min heapability is defined similarly. 
\end{definition}  

\begin{example}
  $X=[5,1,4,2,3]$ is max 2-heapable: A max 2-heap is displayed in
  Figure~\ref{fig11}. On the other $Y=[2,4,1,3]$ is obviously
  {\bf not} max 2-heapable, as 4 cannot be a descendant of  2.\
  \label{ex1}
\end{example} 

\begin{figure} 
\begin{center} 
\begin{tikzpicture}[scale=0.8, ->,>=stealth',level/.style={sibling distance = 3cm/#1,
  level distance = 1cm}] 
\node [arn_r] {5}
    child{ node [arn_r] {1} 
 		child{ node [arn_x] {}}
        child{ node [arn_x] {}}
            }                         
    child{ node [arn_r] {4}
    	child{ node [arn_r] {2}
            child{ node [arn_x] {}} 
             child{ node [arn_x] {}} 
            }
            child{ node [arn_r] {3}
							child{ node [arn_x] {}}
							child{ node [arn_x] {}}
            }
		}
; 
\end{tikzpicture}
\end{center} 

\caption{Heap ordered tree for sequence X in Example~\ref{ex1}.}
\label{fig11} 
\end{figure} 

Heapability can be viewed as a relaxation of the notion of decreasing sequence,
thus it is natural to attempt to extend to heapable sequences the framework of the
Ulam-Hammersley problem \cite{romik2015surprising}, concerning the 
scaling behavior of the longest increasing subsequence (LIS) of a random
permutation. This extension can be performed in (at least) two ways, equivalent
for LIS but no longer equivalent for heapable sequences: the first way, that of
studying the length of the longest heapable subsequence, was dealt with in
\cite{byers2011heapable}, and is reasonably simple: with high probability the
length of the longest heapable subsequence of a random permutation is $n-o(n)$. On the other hand, by Dilworth's theorem \cite{dilworth1950decomposition} the length of the longest increasing subsequence of an arbitrary sequence is equal to the number of classes in a partition of the original sequence into \textit{decreasing} subsequences. Thus it is natural to call {\em the Hammersley-Ulam problem for heapable sequences} the investigation of the scaling behavior of the number of classes of the partition of a random permutation into a minimal number of (max) heapable subsequences. This was the approach we took in \cite{istrate2015heapable}. Unlike the case of LIS, for heapable subsequences the relevant parameter (denoted in \cite{istrate2015heapable} by $MHS_{k}(\pi)$) scales logarithmically, and the following conjecture was proposed: 

\begin{conjecture} 
For every $k\geq 2$ there exists $\lambda_{k}>0$ s.t., as $n\rightarrow \infty,$ $\frac{E[MHS_{k}(\pi)]}{ln(n)}$ converges to $\lambda_{k}$. Moreover {\bf $\lambda_{2}=\frac{1+\sqrt{5}}{2}$ is the golden ratio. }
\label{conject} 
\end{conjecture}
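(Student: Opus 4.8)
The plan is to reduce the statement to the asymptotics of a single scalar functional of the Hammersley word process $HAD_{k}$ and then analyze that functional; since, as the paper itself signals, a rigorous proof of the golden-ratio identity is out of reach, the realistic target is the reduction together with a mean-field prediction. The starting point is the fact, proved in \cite{istrate2015heapable}, that $MHS_{k}(\pi)$ equals the number of trees in the forest produced by running $HAD_{k}$ on $\pi$. A life is subtracted at a given step exactly when the incoming particle acquires a parent, so this number equals $n$ minus the total number of subtracted lives, i.e.\ $MHS_{k}(\pi)=\sigma(w_{n})-(k-1)n$, where $\sigma$ is digit sum, $w_{n}$ the terminal word and $n=|\pi|$. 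On the other hand a new tree is created at step $t$ precisely when the fresh $k$ is inserted to the right of the rightmost nonzero digit of $w_{t-1}$; if $Z_{t-1}$ denotes the number of trailing zeros of $w_{t-1}$ (with $Z_{0}=0$), there are exactly $Z_{t-1}+1$ such slots out of the $t$ available, so, summing conditional probabilities,
\[
 E[MHS_{k}(\pi)]=\sum_{t=1}^{n}\frac{E[Z_{t-1}]+1}{t}=H_{n}+\sum_{t=1}^{n}\frac{E[Z_{t-1}]}{t},
\]
with $H_{n}$ the $n$-th harmonic number. Because $w_{t-1}$ is $k$-dominant (Theorem~\ref{main}) it starts with the digit $k$, so $Z_{t-1}\le t-2$ and all terms are finite.

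Given this identity, the existence of $\lambda_{k}$ follows as soon as one proves that $E[Z_{t}]$ converges to a finite constant $c_{k}$: a standard logarithmic-averaging argument then yields $\sum_{t\le n}E[Z_{t-1}]/t\sim c_{k}\ln n$, hence $E[MHS_{k}(\pi)]/\ln n\to\lambda_{k}=1+c_{k}>0$. To establish convergence of $E[Z_{t}]$ I would study the Markovian evolution of the \emph{right boundary profile} of the word --- the finite block $c_{r}0^{a_{r-1}}c_{r-1}\dots$ read from the right end. The trailing-zero count changes only under two events: a new-tree insertion, after which $Z_{t}$ is uniform on $\{0,1,\dots,Z_{t-1}\}$ (conditional mean $Z_{t-1}/2$), and a decrement that kills the rightmost nonzero digit, possible only when that digit has a single life, which raises $Z_{t}$ by one plus the length of the adjacent zero block. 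Both events have conditional probability of order $1/t$, so on the logarithmic time scale $\tau=\ln t$ the profile becomes a fixed-size interacting dynamics; the program is to show it admits a hydrodynamic/mean-field limit, read $c_{k}$ off the stationary profile of that limit, and control the fluctuations well enough to transfer the convergence to the expectation --- e.g.\ via a monotone coupling showing that $t\mapsto E[Z_{t}]$ is eventually of bounded variation.

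For $k=2$ the remaining task is to pin down $c_{2}$. Heuristically, in the stationary regime the boundary profile is a geometric-type cascade of $1$'s separated by zero blocks, and balancing the ``halving'' produced by new-tree insertions (which occur at rate $(c+1)/t$) against the inflation produced by the other event leads to a quadratic self-consistency relation equivalent to $c^{2}+c-1=0$, whose positive root is $c_{2}=(\sqrt5-1)/2=1/\varphi$; hence $\lambda_{2}=1+1/\varphi=\varphi$. One also has, combining $MHS_{k}(\pi)=\sigma(w_{n})-(k-1)n$ with Theorem~\ref{foo4}, the exact finite-$n$ formula $E[MHS_{k}(\pi)]=\frac1{n!}\sum_{|w|=n}(\sigma(w)-(k-1)n)\,F_{k}(w)$, so the conjecture is equivalent to extracting the coefficient of $\ln n$ in this sum. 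The main obstacle --- and the reason the paper only offers experimental support --- lies precisely here: the Hammersley process has an unboundedly growing state space with no evident stationary measure, so neither the convergence of $E[Z_{t}]$ nor the identification of its limit with $1/\varphi$ is currently within rigorous reach, and the power-series computation of Section~\ref{sec:pseries} is the numerical surrogate. Accordingly I would present the reduction above as what is actually provable, state the existence of $\lambda_{k}$ conditionally on the (very plausible) convergence of $E[Z_{t}]$, and leave the golden-ratio identity as the prediction of the mean-field equation, to be checked numerically.
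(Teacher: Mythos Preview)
The statement you are addressing is a \emph{conjecture} in the paper, not a theorem; the paper offers no proof, only experimental support (Section~\ref{sec:apps}). So the correct comparison is between your heuristic reduction and the paper's heuristic reduction, and on that score the two are essentially the same idea expressed at different levels of precision.

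Your identity $MHS_{k}(\pi)=\sigma(w_{n})-(k-1)n$ and the step-by-step expansion $E[MHS_{k}(\pi)]=H_{n}+\sum_{t\le n}E[Z_{t-1}]/t$ are exactly a sharpening of what the paper sketches verbally with its \emph{increments}: the paper defines $\#inc_{k}(w)=1+Z$ (one plus the number of trailing zeros), observes that the conditional probability of creating a new heap is $\#inc_{k}(w)/(n+1)$, and asserts that the limit of $E[\#inc_{k}]$ is $\lambda_{k}$. Your formula makes that reasoning explicit and also exhibits the Ces\`aro/harmonic averaging step the paper leaves implicit. The consequence $\lambda_{k}=1+c_{k}$ with $c_{k}=\lim E[Z_{t}]$ is the paper's ``mean number of increments tends to $\lambda_{k}$'' in symbols.

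Where you and the paper diverge is in the heuristic for the constant itself. The paper does not write down a self-consistency equation; instead it samples long trajectories, observes that the empirical law of $\#inc_{2}$ looks geometric with success probability $p=(\sqrt{5}-1)/2$, and reads off $\lambda_{2}=1/p=\varphi$. Your route --- balancing the halving from right-end insertions against the inflation from digit-deaths to get $c^{2}+c-1=0$ --- is a mean-field argument that the paper alludes to only via the citation \cite{istrate2016heapability}. Both land on the same numerical prediction, and both, as you correctly note, stop short of a proof: neither the convergence of $E[Z_{t}]$ nor the identification of its limit is established. (The paper does remark that the \emph{existence} of $\lambda_{k}$ was settled in \cite{basdevant2016hammersley}, so that portion of the conjecture is actually a theorem; only the value $\lambda_{2}=\varphi$ remains open.)

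In short: your reduction is the paper's reduction, written out more carefully; your mean-field derivation of $1/\varphi$ is an addition the paper does not contain; and your honest disclaimer that the argument is not rigorous is precisely the paper's own stance.
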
 

The problem was further investigated in \cite{basdevant2016hammersley,basdevant2017almost}, where the existence of the constant $\lambda_{k}$ was proved. The equality of $\lambda_{2}$ to the golden ratio is less clear: authors of \cite{basdevant2016hammersley} claim it is slightly less than $\phi$. Some non-rigorous, "physics-like" arguments, in favor of the identity $\lambda_{2}=\phi$ was already outlined in \cite{istrate2015heapable}, and is presented in \cite{istrate2016heapability}, together with experimental evidence. Here we bring more convincing such evidence. 

The intuition for Conjecture~\ref{conject} relies on the extension from the LIS problem to heapable sequences of a correspondence between LIS and an interactive particle system \cite{aldous1995hammersley} called the  {\it Hammersley-Aldous-Diaconis (shortly, Hammersley or HAD) process}. The validity of correspondence was noted, for heapable sequences, in \cite{istrate2015heapable}. The generalized process was further investigated in \cite{basdevant2016hammersley}, where it was called {\em the Hammersley tree process}.

 To recover the connection with random permutations we will assume from now on that the $X_{i}$'s in process $HAD_{k}$ are independent random numbers in $(0,1)$. The proposed value for $\lambda_{2}$ arises from a conjectural identification of the "hydrodynamic limit" of the Hammersley tree process (in the form of a compound Poisson process).

 As $n\rightarrow \infty$ a "typical" sample word from the Hammersley process $HAD_{2}$ will have approximately $c_{0}n$ zeros, $c_{1}n$ ones and $\sim c_{2}n$ twos, for some constants\footnote{Nonrigorous computations predict that $c_{0}=c_{2}=\frac{\sqrt{5}-1}{2}, c_{1}=\frac{3+\sqrt{5}}{2}.$} $c_{0},c_{1},c_{2}>0$.  Moreover, conditional on the number of zeros, ones, twos, in a typical word these digits are "uniformly mixed" throughout the sequence.  Experimental evidence presented in \cite{istrate2016heapability} seems to confirm the accuracy of this heuristic description. 

A proof of the existence of constants $c_{0},c_{1},c_{2}$ was attempted in \cite{istrate2015heapable} based on subadditivity (Fekete's lemma). However, part of the proof in \cite{istrate2015heapable} is incorrect. While it could perhaps be fixed using more sophisticated  tools (e.g. the subadditive ergodic theorem \cite{spa-average}) than those in \cite{istrate2015heapable}, an alternate approach involves analyzing the asymptotic behavior of process $HAD_{k}$ using (noncommutative) power series (\cite{salomaa2012automata,berstel2011noncommutative}). 

Understanding and controlling the behavior of formal power series $F_{k}$ may be the key to obtaining a rigorous analysis that confirms the picture sketched above. Though that we would very much want to accomplish this task, in this paper we resign ourselves to a simpler, language-theoretic, version of this problem, that of computing the associated formal language.

The Ulam-Hammersley problem has also been studied \cite{justicz1990random} for sets of random intervals, generated as follows: to generate a new interval $I_{n}$ first we sample (independently and uniformly) two random $x,y$ from $(0,1)$. Then we let $I_{n}$ be the interval $[min(x,y), max(x,y)]$.  In fact the problem was settled in \cite{justicz1990random}, where the scaling of LIS for sets of random intervals was determined to be $ 
\lim_{n\rightarrow \infty} \frac{E[LIS(I_{1},\ldots, I_{n})]}{\sqrt{n}}=\frac{2}{\sqrt{\pi}}.$

 Several results on the heapability of partial orders were proved in \cite{balogh2017computing}; in particular, the greedy algorithm for partitioning a permutation into a minimal number of heapable subsequences extends to interval orders. This justifies an extension of the Ulam-Hammersley problem from increasing to heapable sequences of intervals.  Indeed, in  \cite{balogh2017computing} we conjectured the following scaling law: 

\begin{conjecture} 
 For every $k\geq 2$ there exists $c_{k}>0$ such that, if $R_{n}$ is a sequence of $n$ random intervals then $
\lim_{n\rightarrow \infty} \frac{E[\#Heaps_{k}(R_n)]}{n}=c_{k}.\mbox{ Moreover }c_{k}=\frac{1}{k+1}.$
\label{conj:int} 
\end{conjecture}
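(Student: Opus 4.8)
\noindent\emph{(Conjecture~\ref{conj:int} is open; what follows is the strategy I would pursue, together with the heuristic computation that already pins down the constant.)}

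The plan is to reduce $\#Heaps_{k}(R_n)$ to a counting quantity of the process $HAD_{k,INT}$ (Definition~\ref{hamint}) and then to pass to a hydrodynamic limit. By the greedy-algorithm correspondence of \cite{balogh2017computing}, the minimal number of $k$-heapable sub-sequences of $R_n$ equals the number of roots of the forest built by the interval Hammersley process, and a new root is created at a given step exactly when the inserted interval finds no parent, i.e.\ when no decrement occurs. Writing $D_n$ for the number of decrementing steps among the first $n$, this gives the exact identity $\#Heaps_{k}(R_n)=n-D_n$, so it suffices to prove $E[D_n]/n\to k/(k+1)$. I would next recast the state as a density of remaining lives: identify each particle with its left-endpoint value $\ell\in(0,1)$ and its number of surviving lives (``slots''). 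A newborn particle carries $k$ slots and appears at value $\ell=\min(X_n,Y_n)$, whose density is $2(1-\ell)$; hence slots are created at rate $2k(1-\ell)$ per unit value. Each step, a right endpoint $w=\max(X_n,Y_n)$ (density $2w$) removes one slot, from the smallest live particle of value $>w$.

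The analytic core is to show that the empirical slot density, scaled by $n$, concentrates on a deterministic profile $\sigma(\ell)$, and then to solve the conservation law obtained by balancing creation against destruction across each cut at value $\ell$. Let $q(w)$ denote the limiting probability that a step's right endpoint at $w$ finds a slot above it (equivalently, the conditional decrement probability). I expect a free boundary at $a=k/(k+1)$: for $w<a$ there is a positive limiting density of slots at values $>w$, so such endpoints a.s.\ find a slot and $q(w)=1$; for $w>a$ creation is matched exactly by destruction, there is no linear accumulation, and the cut-balance $\int_{\ell}^{1}2w\,q(w)\,dw=k(1-\ell)^2$ holds with $\int_\ell^1\sigma=0$. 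Differentiating the latter gives $2\ell\,q(\ell)=2k(1-\ell)$, i.e.\ $q(w)=k(1-w)/w$ on $(a,1)$. The two branches match continuously, $q(a^-)=1=q(a^+)$, and on $(0,a)$ the balance yields the accumulated profile $\int_{\ell}^{1}\sigma=(k+1)(\ell-a)^2\ge 0$, which vanishes precisely at the boundary.

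It then only remains to integrate: $E[D_n]/n\to \int_0^1 2w\,q(w)\,dw=\int_0^a 2w\,dw+\int_a^1 2k(1-w)\,dw=a^2+k(1-a)^2=\frac{k^2}{(k+1)^2}+\frac{k}{(k+1)^2}=\frac{k}{k+1}$, whence $c_k=1-\frac{k}{k+1}=\frac{1}{k+1}$, exactly as conjectured. The main obstacle is making the hydrodynamic limit rigorous: proving concentration of the scaled slot density and justifying local equilibrium, namely that below the free boundary the number of live slots above $w$ is genuinely $\Theta(n)$ (so $q\to 1$) while in the thin region $(a,1)$ the surviving-slot process behaves like the entropy/shock solution of the associated scalar conservation law. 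The natural routes are a subadditive-ergodic argument in the spirit of the Hammersley analysis of \cite{aldous1995hammersley,basdevant2016hammersley} combined with concentration, or a coupling to a more tractable ``sources and sinks'' variant (as in the work of Cator and Groeneboom, and Basdevant and Singh). Controlling the boundary layer near $\ell=1$ and the $o(n)$ fluctuations of slots inside the free-boundary region is where I expect the real difficulty to lie.
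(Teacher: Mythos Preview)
The paper does not prove Conjecture~\ref{conj:int}; it is recorded there as an open problem taken from \cite{balogh2017computing}, with only a promise to investigate the constant $c_k$ numerically in a journal version. There is therefore no paper-proof to compare your attempt against, and you are right to flag the statement as open and to present a strategy rather than a proof.

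As a heuristic your computation is internally consistent and recovers the conjectured value: the identity $\#Heaps_k(R_n)=n-D_n$ is correct, the densities $2(1-\ell)$ and $2w$ for $\min$ and $\max$ are correct, and the final integral $a^2+k(1-a)^2=k/(k+1)$ with $a=k/(k+1)$ checks out. One point you gloss over is that the cut-balance $\int_\ell^1 2w\,q(w)\,dw=k(1-\ell)^2$ counts only kills triggered by right endpoints $w>\ell$, whereas a right endpoint $w\in(a,\ell)$ with $q(w)>0$ may also destroy a slot above $\ell$ if the interval $(w,\ell)$ happens to be empty of live particles; your ansatz $\int_\ell^1\sigma=0$ on the thin region makes this a genuine possibility. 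For the balance to close as written you implicitly need that such kills are asymptotically local (the slot destroyed sits at $w+o(1)$), which is plausible but is an additional ingredient to be proved alongside the hydrodynamic limit. Beyond that, you have correctly identified where the real work lies: concentration of the scaled slot profile and control of the boundary layer near $a$, neither of which the paper addresses.
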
 

Remarkably, it was already noted in \cite{balogh2017computing} that the connection between the Ulam-Hammersley problem and particle systems extends to the interval setting as well. To prove a similar result for the interval Hammersley process we need to "combinatorialize" the process from Definition~\ref{ham:int}, that is, to replace that definition (which employs (random) real values in $(0,1)$) with an equivalent stochastic process on words. 

The combinatorialization process has some technical complications with respect to the case of permutations. Specifically, for permutations the state of the system could be preserved, with no real loss of information by a string representing only the number of lifelines of the given particles, but \textbf{not} their actual values. 
This enables (as we will see below in Section~\ref{sec:pseries}) an algorithm for computing the associated formal power series. 

To accomplish a similar goal for random intervals we apparently need to take into account the fact that at each step we choose {\em two} random numbers in Definition~\ref{ham:int}, even though only one of them receives a particle, since the second one influences the state of the system. Thus, the proper discretization requires an extra symbol $\diamond$ (that marks the positions of real values that were generated but in which no particle was inserted), and is accomplished as described in Definition~\ref{hamint} 
and the language from Definition~\ref{lang:ham:int}.  

A result that was easy for the process $HAD_{k}$ but deserves some discussion in the case of the interval process is the following: 

\begin{proposition} 
Consider the string $w_{n}\in \Sigma_{k}^{*}$ obtained by taking a random state of the Hammersley interval process with $k$ lifelines at stage $n$ and then "forgetting" the particle value information (recording instead only the value in $\Sigma_{k}\cup \{\diamond\}$). Then $w_{n}$ has the same distribution as a sample from process $HAD_{k,INT}$ at stage $n$.  
\end{proposition}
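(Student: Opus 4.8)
The plan is to establish a bijection (more precisely, a distribution-preserving correspondence) between the trajectories of the "real-valued" interval Hammersley process of Definition~\ref{ham:int} and the trajectories of the combinatorial process $HAD_{k,INT}$ of Definition~\ref{hamint}, and then argue that this correspondence commutes with the "forgetting" map that sends a real-valued state $Z_n$ to a word over $\Sigma_k\cup\{\diamond\}$. First I would make the forgetting map precise: given a real-valued state, list the recorded reals $(X_0,Y_0,\ldots,X_{n-1},Y_{n-1})$ in increasing order of value, and emit, for each, the symbol in $\{0,\ldots,k\}$ equal to its current number of lives if it carries a particle, or $\diamond$ if it was generated but never received a particle. (Ties among the reals occur with probability $0$, so the ordering is a.s.\ well defined; at stage $n$ the word has length $2n$.) Call this map $\Phi$.

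The key step is a coupling/induction argument on $n$. The inductive hypothesis is that, for a real-valued state $Z_{n-1}$ with word $\Phi(Z_{n-1})=w$, the following holds: conditioned on the \emph{relative order type} of the $2(n-1)$ recorded reals (equivalently, conditioned on $w$), all order types are a.s.\ equally likely given the sequence of combinatorial choices, and the next step of the real-valued process, driven by fresh uniform reals $X_n,Y_n\in(0,1)$, induces on $w$ exactly the transition distribution of $HAD_{k,INT}$. The point is that inserting two fresh i.i.d.\ uniform reals into a configuration of $2(n-1)$ reals whose order type is uniform produces two new "slots" whose positions among the existing reals are uniformly random (with repetitions, and with a uniformly random tie-break when $X_n=Y_n$ lands in the same gap) — which is precisely the rule "choose two positions $X_n,Y_n$ uniformly at random with repetition into the length-$2(n-1)$ string" in Definition~\ref{hamint}. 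Moreover the particle with value $\min(X_n,Y_n)$ becomes the new $k$ inserted at position $\min(X_n,Y_n)$, the value $\max(X_n,Y_n)$ becomes the $\diamond$, and "the smallest live particle with value higher than $\max(X_n,Y_n)$ loses a life" translates verbatim into "the first nonzero digit after the newly inserted $\diamond$ loses one unit", because $\Phi$ preserves the left-to-right order of values and the life counts. I would also check the base case: at $n=1$ the real-valued process has one particle with $k$ lives and one extra recorded real above it (since $\max(X_1,Y_1)$ generated no particle and there is nothing above it to lose a life), so $\Phi(Z_1)=k\diamond=Z_1$ in the combinatorial process.

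The main obstacle — and the reason the proposition "deserves some discussion" — is verifying that the \emph{conditional uniformity of the order type} is genuinely preserved, i.e.\ that after many steps the only information $\Phi(Z_n)$ retains about the reals is their order type and life counts, and that no correlation between the life-count pattern and the finer geometry of the reals creeps in. Concretely one must argue that, given the combinatorial word $w_n$, the actual reals $(X_0,Y_0,\ldots)$ are distributed as $2n$ i.i.d.\ uniforms conditioned to have the order type and particle-assignment pattern encoded by $w_n$; this is what makes the fresh-uniform insertion rule collapse to the uniform-position rule. I would prove this as part of the same induction, using the exchangeability of i.i.d.\ uniforms and the fact that the update rule (which real gets the particle, which loses a life) depends only on the order type, not on the numerical values. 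Once that invariant is in hand, the equality of one-step transition kernels gives, by induction on $n$, equality of the full distributions at every stage $n$, which is the claim. The remaining details — handling the probability-zero tie events, and the bookkeeping that $\Phi$ sends the three sub-operations of Definition~\ref{ham:int} to the three sub-operations of Definition~\ref{hamint} — are routine.
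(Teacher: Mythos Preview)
Your overall strategy—reducing the claim to the exchangeability of the $2n$ i.i.d.\ uniform samples—is exactly the paper's, whose entire argument is the lemma that the ordering of $X_0,Y_0,\ldots,X_{n-1},Y_{n-1}$ is a uniformly random permutation of size $2n$. Your write-up is far more careful than the paper's one-liner, and the inductive invariant you isolate (that, conditioned on the word, the underlying reals remain distributed as i.i.d.\ uniforms) is the right one.

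There is, however, a concrete error in your inductive step. You assert that dropping two fresh i.i.d.\ uniforms $X_n,Y_n$ among $m=2(n-1)$ i.i.d.\ uniforms yields gap positions that are \emph{independent uniform} on $\{1,\ldots,m+1\}$, matching the literal ``choose two positions independently, with repetition'' reading of Definition~\ref{hamint}. This is false: the probability that both fresh reals land in the \emph{same} gap is the probability that they are adjacent in the full rank order of $m+2$ exchangeable points, namely $2/(m+2)$, not $1/(m+1)$. For $m=2$ one gets $P(G_{X}=G_{Y}=1)=E[U_{(1)}^{2}]=1/6\neq 1/9$, and indeed the two processes already disagree at stage $n=2$: the real-valued process gives the word $k\diamond(k{-}1)\diamond$ probability $1/6$, while the independent-uniform-position rule gives it $1/9$. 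What exchangeability actually delivers is that the pair of ranks of $X_n,Y_n$ among all $m+2$ reals is uniform over ordered pairs of \emph{distinct} elements of $\{1,\ldots,m+2\}$—equivalently, insert the first new value into a uniform gap of $Z_{n-1}$ and then the second into a uniform gap of the resulting length-$(m{+}1)$ string. The paper's terse proof never commits to the independent-uniform reading (and Definition~\ref{hamint} is arguably imprecise on exactly this point), so your coupling is salvageable once the faulty sentence is replaced by this rank-pair statement; but as written, the step ``which is precisely the rule\ldots'' does not hold.
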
 
\begin{proof} 

The crux of the proof is the following 

\begin{lemma} 
The ordering of the values $X_{0},Y_{0},X_{1},Y_{1},\ldots, X_{n-1}Y_{n-1}$ inserted in the first $n$ steps in the Hammersley interval process (disregarding their number of lifelines) is that of a random permutation with $2n$ elements. 
\end{lemma} 
\begin{proof}
$X_{i},Y_{i}$ have the same distribution, both are random uniformly distributed variables in $(0,1)$.  Thus to simulate  $HAD_{k,INT}$ for $n$ steps one needs $2n$ random numbers in (0,1), which yields a random permutation of size $2n$. 
\end{proof}
\end{proof} 

This discussion motivates the language-theoretic study of trajectories of the interval Hammersley process $HAD_{k,INT}$ as well. In that respect Definition~\ref{lang:2} seems better motivated than Definition~\ref{lang:ham:int}. Indeed, due to the presence of diamonds, words in the Definition~\ref{lang:2} are not "physical", as diamonds do not necessarily correspond to actual particles. On the other hand one can easily obtain an algorithm (similar to the ComputeMultiplicity algorithm presented above) that computes multiplicities for 
``extended words" in the process $HAD_{k,INT}$ such as those in the Definition~\ref{lang:ham:int}. Hence the study of this second language is motivated on pragmatic grounds, as a first step to investigating $F_{k, INT}$, the formal power series of multiplicities in the interval Hammersley process. We defer this investigation to the journal version of the paper.

\section{Proof  of the main result} 
\label{sec:main}

The proof of Theorem~\ref{main} proceeds by double inclusion. Inclusion "$\subseteq$" is proved with the help of several easy auxiliary results: 

\begin{lemma} \label{claim:startWith2}
Every word in $L_H^k$ starts with a $k.$
\end{lemma}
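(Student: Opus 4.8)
The plan is to argue directly from the dynamics of the process $HAD_k$ on words. Recall that a trajectory of $HAD_k$ begins with the single-letter word $k$ (the first particle, which arrives with $k$ lives and has nothing to its right to lose a life), and each subsequent step inserts a $k$ at some chosen position and then decrements the first nonzero digit strictly to the right of the insertion point, leaving everything to the left of the insertion point untouched. The key structural observation I would isolate is the following invariant: \emph{at every stage of every trajectory, the first letter of the current word is a $k$}. Once this invariant is established, the lemma is immediate, since $L_H^k$ is by definition the set of words appearing along some trajectory.

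To prove the invariant I would proceed by induction on the number of steps $n$. The base case $n=1$ holds since the word is just $k$. For the inductive step, suppose the word $w$ at stage $n$ starts with $k$, and consider the word $w'$ obtained in one step by choosing an insertion position $i$. There are two cases. If $i$ is the leftmost position (we insert before all existing letters), then $w'$ begins with the freshly inserted $k$, so we are done. If $i$ is any other position, then the first letter of $w'$ is the first letter of $w$, which is $k$ by the inductive hypothesis; moreover the decrement operation acts on the first nonzero digit strictly to the right of $i$, hence strictly to the right of position $1$, so it cannot touch the leading letter. In either case $w'$ starts with $k$, completing the induction.

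I do not anticipate a genuine obstacle here; the only point requiring a little care is to make sure the decrement step can never affect position $1$ when the insertion happens at position $\geq 2$, and that the "first nonzero digit to the right" is well-defined or vacuous (if there is no such digit, no decrement occurs, which only makes the claim easier). This is exactly the content of the word-level description of $HAD_k$ given before Figure~\ref{fig1}, so the argument is a short formal induction rather than a computation.
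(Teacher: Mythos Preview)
Your proof is correct and is essentially the same argument as the paper's, just phrased at the word level rather than the particle level: the paper observes that the particle with the globally smallest value can never lose a life (nothing is ever inserted to its left), which is exactly your inductive invariant that the first letter stays equal to $k$. The induction you spell out is a faithful combinatorial unwinding of that one-line semantic observation.
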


\begin{proof}
Follows easily by appealing to the particle view of the Hammersley process: the particle with the smallest label $x$ stays with $k$ lives until the end of the process, as no other particle can arrive to its left. 
\end{proof}

\begin{lemma}\label{claim:prefixIsWord}
$L_H^k$ is closed under prefix.
\end{lemma}

\begin{proof}
Again we resort to the particle view of the Hammersley process: let $w\in L_{H}^{k}$ be a word and $u=x_{0}\ldots x_{n-1}$ be a trajectory in [0,1] yielding $w$. A non-empty prefix $z$ of $w$ corresponds to the restriction of $u$ to some segment $[0,l]$, $0<l<1$. This restriction is a trajectory itself, that yields $z$. 
\end{proof}

\begin{lemma}\label{claim:hasMore2}
Every word in $L_H^k$ has a positive structural difference.  
\end{lemma}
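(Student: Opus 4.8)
The plan is to prove the stronger statement that $s(w)\geq 1$ for every $w\in L_H^k$, where I write $s(w)=|w|_k-\sum_{i=0}^{k-2}(k-i-1)|w|_i$ for the structural difference; this clearly implies the lemma, and combined with Lemma~\ref{claim:startWith2} and Lemma~\ref{claim:prefixIsWord} it also yields the "$\subseteq$" inclusion of Theorem~\ref{main}. The key is to exhibit $s$ as a quantity that never decreases along a trajectory of $HAD_k$. First I would rewrite $s$ in a uniform form: since the coefficient implicitly attached to $|w|_{k-1}$ is $0=-(k-(k-1)-1)$ and the coefficient of $|w|_k$ is $1=-(k-k-1)$, the sum in the definition may be harmlessly extended to range over all of $\{0,1,\dots,k\}$, giving
\[
s(w)=\sum_{i=0}^{k}(i-k+1)\,|w|_i .
\]
Writing $c_i:=i-k+1$, the only arithmetic fact I need is the telescoping identity $c_{d-1}-c_d=-1$, valid for every value of $d$.

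Next I would compute the effect of a single step of the word process on $s$, using the description of the dynamics given after Example~\ref{ex11}. A step inserts a letter $k$ somewhere, which changes $s$ by $c_k=+1$; then, if there is a nonzero digit to the right of the insertion point, it decrements the first such digit from its current value $d\in\{1,\dots,k\}$ to $d-1$, which changes $s$ by $c_{d-1}-c_d=-1$. The intervening block of zeros is untouched and contributes nothing to $s$. If no nonzero digit lies to the right of the insertion point, the inserted $k$ is the only change. Hence one step changes $s$ by exactly $0$ (when the inserted $k$ costs some particle a life) or $+1$ (when it does not); in either case $s$ does not decrease.

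Finally, every $w\in L_H^k$ is produced by a trajectory that starts from the one-letter word $k$, for which $s(k)=c_k=1$. By the monotonicity just established, $s(w)\geq 1>0$, which is the claim. The only point requiring any care is the case distinction for the decremented digit --- in particular checking that a "full" particle losing a life, i.e. the step $d=k\mapsto k-1$, is correctly subsumed --- but this is precisely what the uniform identity $c_{d-1}-c_d=-1$ guarantees, so there is no genuine obstacle here; the monotone invariant does all the work.
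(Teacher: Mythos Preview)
Your proof is correct. The rewriting $s(w)=\sum_{i=0}^{k}(i-k+1)\,|w|_i$ is valid, the step-by-step change in $s$ is exactly as you compute, and the conclusion $s(w)\geq s(k)=1$ follows immediately.

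Your route differs from the paper's. The paper argues statically: fixing a trajectory for $w$, it introduces $\lambda$ (the number of arrivals that take no life) and $\lambda_i$ (the number of arrivals that take a life from a particle currently holding $i$ lives), expresses each $|w|_i$ in terms of these counters, and then verifies algebraically that the structural difference collapses to $\lambda$, which is positive since at least the globally largest particle never removes a life. You instead argue dynamically, exhibiting $s$ as a monotone invariant along the trajectory via a one-line local computation. Your argument is shorter and avoids the bookkeeping of solving for the $\lambda_i$; it also makes transparent \emph{why} the structural difference equals the number of ``free'' insertions (each such step contributes $+1$, every other step contributes $0$). The paper's accounting, on the other hand, yields the finer identities $\lambda_{i+1}=|w|_0+\cdots+|w|_i$ as a by-product, which your approach does not directly give. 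For the purpose of this lemma, your invariant argument is the cleaner choice.
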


\begin{proof}

Let $w\in L_{H}^{k}$ and let $t$ be a corresponding trajectory in the particle process. 

Let $\lambda$ be the number of times a particle arrives as a local maximum (without subtracting a lifeline from anyone). 
For $i=1,\ldots, k$ let $\lambda_{i}$ be the number of time the newly arrived particle subtracts a lifeline from a particle currently holding exactly $i$ lives. $\lambda, \lambda_{1},\ldots, \lambda_{k}\geq 0$. Moreover, $\lambda > 0$, since the largest particle does not take any lifeline. 

By counting the number of particles with $i$ lives at the end of the process, we infer: $|z|_{0}=\lambda_{1}, |z|_{1}=\lambda_{2}-\lambda_{1}, \cdots |z|_{k-1}=\lambda_{k}-\lambda_{k-1}.$ Finally, $
|z|_{k}= \lambda + \sum_{i=0}^{i-2} \lambda_{i}. (*)$ 

Simple computations yield $\lambda_{i+1}=|z|_{0}+\ldots + |z|_{i}$, for $i=0, \ldots, k-1$. Relation (*) and inequality $\lambda > 0$ yield the desired result. 
\end{proof}

Together, Claims~\ref{claim:startWith2},~\ref{claim:prefixIsWord} and~\ref{claim:hasMore2} establish the fact that any word from $L_H^k$ is $k$-dominant, thus proving inclusion "$\subseteq$". To proceed with the opposite inclusion,  for every $k$-dominant word $w$ we must construct a trajectory of the process $HAD_{k}$ that acts as a witness for $w\in L_{H}^k$.

We will further reduce the problem of constructing a trajectory $T_{z}$ to the case when $z$ further satisfies a certain simple property, explained below: 
\begin{definition}
$k$-dominant word  $u$ is called \textit{critical} if 
$|u|_k - \sum\limits_{i=0}^{k-2} (k-1-i)\cdot |u|_i = 1.$
\end{definition}


 The above-mentioned reduction has the following statement: 

\begin{lemma} \label{lemma:semiInL2} Every 
$k$-dominant {\bf critical} $z$ is witnessed by some trajectory $T_{z}$. 
\end{lemma}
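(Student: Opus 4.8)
The plan is to build the witnessing trajectory $T_{z}$ \emph{backwards}, by induction on $n=|z|$: identify one ``last move'' of $HAD_{k}$, undo it, and check that the predecessor word $z'$ is again critical $k$-dominant, but of length $n-1$. Throughout write $s(w)=|w|_{k}-\sum_{i=0}^{k-2}(k-i-1)|w|_{i}$ for the structural difference, so that $w$ is $k$-dominant iff $s(v)>0$ for all $v\in Pref(w)$, and critical iff moreover $s(w)=1$. The base case $n=1$ is trivial: the only $k$-dominant word of length one is $z=k$ (automatically critical), witnessed by the one-state trajectory $k$.

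For the inductive step, let $z=z_{1}z_{2}\cdots z_{n}$ be critical $k$-dominant with $n\ge 2$. By Lemma~\ref{claim:startWith2} we have $z_{1}=k$, and $z\ne k^{n}$ since $s(k^{n})=n\ne 1$; hence there is a (unique) index $i$ with $1\le i\le n-1$, $z_{1}=\cdots=z_{i}=k$, and $z_{i+1}\le k-1$. Define $z'$ by \emph{deleting the letter $z_{i}=k$ and replacing $z_{i+1}$ by $z_{i+1}+1$} (legal, since $z_{i+1}+1\le k$); thus $|z'|=n-1$. I would then verify: (i) $z$ is recovered from $z'$ in one step of $HAD_{k}$ --- insert a $k$ into $z'$ just before the incremented letter, obtaining $z_{1}\cdots z_{i-1}\,k\,(z_{i+1}+1)\,z_{i+2}\cdots z_{n}$, and since $z_{i+1}+1\ge 1$ is the first nonzero digit to the right of the new $k$, the move subtracts $1$ from it, producing exactly $z$; and (ii) $s$ is preserved, because deleting a $k$ lowers $s$ by $1$ while replacing a digit $m\le k-1$ by $m+1$ raises $s$ by exactly $1$ (a one-line check, splitting on $m\le k-2$ and $m=k-1$), so $s(z')=s(z)=1$ and, for every prefix of $z'$ long enough to contain both altered positions, $s$ equals that of the corresponding prefix of $z$.

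The remaining point --- and the only delicate one --- is that $z'$ is again $k$-dominant, and this is exactly what forces the choice of $i$. Since the deleted $k$ sits \emph{immediately} to the left of the incremented digit, with no zeros between them, every prefix $v'$ of $z'$ is either a prefix of $z_{1}\cdots z_{i-1}$ --- hence a prefix of $z$ with $s(v')>0$ --- or else already contains both altered positions, in which case by (ii) $s(v')$ equals $s$ of the corresponding prefix of $z$ and is $>0$; no prefix falls in between, precisely because no zeros separate the two positions. (Were the ``last move'' chosen differently this could fail: undoing a local-maximum insertion lowers $s$ by $1$ with nothing to compensate, and undoing an insertion separated from the digit it struck by a run of zeros lets the structural difference of some intermediate prefix sink toward $0$.) Hence $z'$ is critical $k$-dominant of length $n-1$, so by the induction hypothesis it has a witnessing trajectory $T_{z'}$, and appending the move of (i) gives $T_{z}$. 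Everything except the $k$-dominance of the predecessor is routine bookkeeping with $s(\cdot)$.
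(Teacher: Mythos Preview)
Your proof is correct and follows the same backward-induction strategy as the paper: delete one $k$, increment its right neighbour, verify the resulting shorter word is again critical $k$-dominant, and recurse. The only difference is which $k$ you delete --- you take the rightmost $k$ of the \emph{initial} block, whereas the paper deletes the \emph{last} $k$ in all of $z$ --- but both choices place the deleted $k$ immediately left of the incremented letter, which is precisely what the prefix-by-prefix structural-difference check needs.
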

\begin{proof} 
By induction on $|z|$. The base case, $|z| = 1$, is trivial, as in this case $z = k$.

{\textbf{Inductive step}}: Assume the claim is true for all the critical words of length strictly smaller than $z$'s. We claim that $w_1$, the word obtained from $z$ by deleting the last copy of $k$ and increasing by 1 the value of the letter immediately to the right of the deleted letter, is critical. 

Indeed, it is easy to see that the structural difference of $w_{1}$ is 1.  Clearly the deleted letter  could not have been the last one,  otherwise deleting it would yield a prefix of $z$ that has structural constant equal to zero. Also clearly, the letter whose value was modified in the previous constraint could not have been a $k$, by definition, and certainly is nonzero after modification. So $w_{1}$'s construction is indeed correct. 
As $|w_1|=|z|-1$, $w_{1}$ satisfies the conditions of the induction hypothesis. 

By the induction hypothesis, $w_1$ can be witnessed by some trajectory $T$. We can construct a trajectory for  $z$ by simply following $T$ and then inserting the last $k$ of $z$ into $w_{1}$ in  its proper position (thus also making the next letter assume the correct value).

\end{proof}

We now derive  Theorem~\ref{main} from Lemma~\ref{lemma:semiInL2}. The key observation is the following fact: every $k$-dominant word $z$ is a prefix of a {\em critical word}, e.g.  $z^{\prime}=z(k-2)^{\lambda}$ where 
$\lambda = |z|_{k}- \sum _{i=0}^{k-2} (k-i-1)\cdot |z|_{i} - 1\geq 0.$

By Lemma~\ref{lemma:semiInL2}, $z^{\prime}$ has a witnessing trajectory $T_{z^{\prime}}$. 
Since the existence of a trajectory is closed under taking prefixes, Theorem~\ref{main} follows. 

\section{Proof of Corollary~\ref{complexity}}

\begin{proof}

For $k=1$ the result is trivial, as $L_{H}^{1}=1\Sigma_{1}^{*}$. The claim that $L^{k}_{H}$ is a deterministic one-counter language for $k\geq 2$ follows from Theorem~\ref{main}, as one can construct a  one-counter pushdown automaton $P_{k}$ for  the language on $k$-dominant words. 

The one-counter PDA has input alphabet ${0, 1, 2, \ldots, k}$. Its stack alphabet contains two special stack symbols, the bottom symbol  $Z$ and another "counting" symbol $*$. The transitions of $P_{k}$ are informally defined as follows: 
\begin{itemize}
\item[-] $P_{k}$ starts with the stack consisting of the symbol $Z$. If the first letter is not a $k$, $P_{k}$ immediately rejects. Otherwise it pushes a $*$ on the stack. 
\item[-] on reading  any subsequent $k$, $P_{k}$ pushes a $*$ symbol on stack. 
\item[-] on reading any symbol $i\in 1\ldots k-2$, $P_{k}$ attempts to pop $k-i-1$ stars from the stack. If this ever becomes impossible (by reaching $Z$), $P_{k}$ immediately rejects. 
\item[-] $P_{k}$ ignores all $k-1$ symbols, proceeding without changing the content of the stack. 
\item[-] If, while reaching the end of the word, the stack still contains a star, $P_{k}$ accepts. 
\end{itemize}

To prove that $L_{H}^{k}$, $k\geq 2$, is not regular is a simple exercise in formal languages. 
 It involves applying the pumping lemma for regular languages to words  $w_{k,n}=k^{n(k-1)+1}0^{n}\in L_{H}^{k}$. We infer that for large enough $n$, $w_{k,n}=w_{1}w_{2}w_{3}$, with $w_{2}$ nonempty and consisting of $k$'s only, such that for every $l\geq 0$, $w_{1}w_{2}^{l}w_{3}\in L_{H}^{k}$. We obtain a contradiction by letting $l=0$, thus obtaining a word $z$ that cannot belong to $L_{H}^{k}$, since $|z|_{k}\leq (k-1)|z|_0$. 

\end{proof}\qed

\section{Proof of Theorem~\ref{foo2}}
It is immediate that $L_{H}^{k}\subseteq L_{H,INT}^{k,\mbox{eff}}$. Indeed, every trajectory of the process $HAD_{k}$ is a trajectory of the process $HAD_{k,INT}$ as well: simply restrict at every stage the two particles to choose the same slot. 

For the opposite inclusion we prove, by induction on $|t|$, that the outcome $w$ of every trajectory $t$ of the interval Hammersley process belongs to $L_{H}^{k}$. The case $|t|=0$ is trivial, since $w=k$.
\begin{definition} 
Given a word $w$ over $\Sigma_{k}$, word $z$ is a {\em left translate of $w$} if $z$ can be obtained from $z$ by moving a $k$ in $w$ towards the beginning of $w$ (we allow "empty moves", i.e. $z=w$). 
\end{definition} 
\begin{lemma} 
$L_{H}^{k}$ is closed under left translates. That is, if $w\in L_{H}^{k}$ and $z$ is a left translate of $w$ then $z\in L_{H}^{k}$. 
\end{lemma}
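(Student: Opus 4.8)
The plan is to reduce the statement to the characterization from Theorem~\ref{main}: $L_H^k$ is exactly the set of $k$-dominant words, i.e. words $w$ all of whose prefixes $z$ satisfy $|z|_k - \sum_{i=0}^{k-2}(k-i-1)|z|_i > 0$. So it suffices to show that if $w$ is $k$-dominant and $z$ is obtained from $w$ by moving a single copy of $k$ leftward (to an earlier position, keeping all other letters in place), then $z$ is $k$-dominant as well; the general case follows since any left translate is a composition of such single moves (and empty moves are trivial).

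First I would set up notation: write $w = u\,k\,v$ where the distinguished $k$ sits just after prefix $u$, and let $z = u'\,k\,u''\,v$ where $u = u'u''$, so the $k$ has been moved from after $u$ to after the shorter prefix $u'$. The only prefixes of $z$ that differ (as multisets of letters) from the corresponding prefixes of $w$ are those that "see" the relocated $k$: precisely the prefixes of $z$ of the form $u'\,k\,(\text{proper prefix of } u''v)$ whose counterpart in $w$ does not yet include the moved $k$. For every such prefix $p$ of $z$, its letter-counts are those of the corresponding prefix $p_0$ of $w$ (namely $u'$ followed by the same block of letters) with one extra $k$: $|p|_k = |p_0|_k + 1$ and $|p|_i = |p_0|_i$ for $i < k$. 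Hence the structural difference of $p$ equals that of $p_0$ plus $1$, which is strictly positive because $p_0$, being a prefix of the $k$-dominant word $w$, already has positive structural difference. All other prefixes of $z$ coincide letter-for-letter with prefixes of $w$, so their structural differences are unchanged and remain positive. Therefore $z$ is $k$-dominant, and by Theorem~\ref{main}, $z\in L_H^k$.

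The main point to get right — and the only place the argument could go wrong — is the bookkeeping of exactly which prefixes of $z$ change relative to $w$ and verifying that the change is always "$+1$ to the count of $k$, nothing else," rather than, say, losing a non-$k$ letter from some prefix. Moving a $k$ strictly to the left never removes a non-$k$ letter from any prefix; it only inserts the $k$ earlier, so every affected prefix gains one $k$ and keeps all its other letters. Once this is pinned down, monotonicity of the structural difference in $|z|_k$ closes the argument. I do not expect any real obstacle here; the lemma is genuinely a one-line consequence of Theorem~\ref{main} once the prefix analysis is made explicit, and it will then feed directly into the induction establishing $L_{H,INT}^{k,\mathrm{eff}} \subseteq L_H^k$ in Theorem~\ref{foo2}.
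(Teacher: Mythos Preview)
Your proposal is correct and follows essentially the same approach as the paper: both reduce to Theorem~\ref{main} and observe that moving a $k$ leftward can only increase (never decrease) the structural difference of any prefix, so $k$-dominance is preserved. The paper's proof is simply the one-sentence version of your more detailed prefix bookkeeping; one minor slip is your phrase ``coincide letter-for-letter'' for the long prefixes (length at least the original position of the moved $k$)---those agree only as multisets, not as words---but since the structural difference depends only on letter counts this is harmless.
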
 
\begin{proof} 
By moving forward a $k$ the structural constants of all prefixes of $w$ can only increase. Thus if these constants are positive for all prefixes of $w$ then they are positive for all prefixes of $z$ as well. 
\end{proof} 

Now assume that the induction hypothesis is true for all trajectories of length less than $n$. Let $t$ be a trajectory of length $n$, let $t^{\prime}$ be its prefix of length $n-1$, let $w$ be the yield of $t$ and $z$ be the yield of $t^{\prime}$. By the induction hypothesis $z\in L_{H}^{k}$. Let $y$ be the word obtained by applying the Hammersley process to $z$, deleting a life from the same particle as the interval Hammersley process does to $z$ to obtain $w$. It is immediate that $w$ is a left translate of $y$ (that is because in the interval Hammersley process we insert a particle to the left of the position where we would in $HAD_{k}$). Since $y\in L_{H}^{k}$, by the previous lemma $w\in L_{H}^{k}$.

\section{Proof of Theorem~\ref{noncfl}}

Define the language $S_{k}=L_{H,INT}^{k}\cap \{k\}^{*}\diamond^{*}\{k-1\}^{*}\diamond^{*}$. 
\begin{lemma} 
$S_{k}=\{ k^{c+d+e} \diamond^{c+e} (k-1)^{c}\diamond^{c+d}|c,d,e\geq 0\}$. 
\end{lemma}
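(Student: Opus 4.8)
The plan is to obtain the lemma directly from the explicit description of $L_{H,INT}^{k}$ given by Theorem~\ref{foo3}. First I would put an arbitrary word of $\{k\}^{*}\diamond^{*}\{k-1\}^{*}\diamond^{*}$ into the normal form $w=k^{a}\diamond^{b}(k-1)^{f}\diamond^{g}$ with $a,b,f,g\geq 0$, and observe that the target set is precisely the set of such words satisfying the three relations $b+g=a+f$, $a\geq b$ and $b\geq f$. Indeed, from these three relations one sets $c=f$, $d=a-b$, $e=b-f$, all nonnegative, and checks $a=c+d+e$, $b=c+e$, $f=c$, $g=c+d$ (the last equality because $g=a+f-b=f+(a-b)$), so $w=k^{c+d+e}\diamond^{c+e}(k-1)^{c}\diamond^{c+d}$; conversely any word of that shape obviously satisfies the three relations. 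Hence it suffices to prove that $w=k^{a}\diamond^{b}(k-1)^{f}\diamond^{g}$ lies in $L_{H,INT}^{k}$ if and only if $b+g=a+f$, $a\geq b$ and $b\geq f$.

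For this I would invoke Theorem~\ref{foo3}, recalling that for words over $\Sigma_{k}\cup\{\diamond\}$ the quantity $s(p)$ is the sum of the numeric values of the letters of $p$ (with $\diamond$ contributing $0$), so here $s(p)=k\cdot|p|_{k}+(k-1)\cdot|p|_{k-1}$. Condition (1) of Theorem~\ref{foo3}, $|w|_{\diamond}=|w|/2$, is immediately equivalent to $b+g=a+f$. Condition (2) has to be checked on every prefix, and I would organize the prefixes by the block in which they end. Prefixes ending inside the initial $k^{a}$ block contain no diamond and satisfy both parts of (2) trivially (with equality in part (b)). For a prefix $k^{a}\diamond^{j}$ with $0\leq j\leq b$, part (a) reads $j\leq a$, whose binding instance $j=b$ gives $a\geq b$, and part (b) reduces to $j\geq 0$. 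For a prefix $k^{a}\diamond^{b}(k-1)^{j}$ with $0\leq j\leq f$, part (b) simplifies to $b\geq j$, whose binding instance $j=f$ gives $b\geq f$, while part (a) follows from $a\geq b$. Finally, for a prefix $k^{a}\diamond^{b}(k-1)^{f}\diamond^{j}$ with $0\leq j\leq g$, part (a) becomes $b+j\leq a+f$ and part (b) becomes $b+j\geq f$; using $j\leq g$, both follow once $b+g=a+f$ and $b\geq f$ are known. Putting this together, conditions (1) and (2) of Theorem~\ref{foo3} hold for $w$ exactly when $b+g=a+f$, $a\geq b$ and $b\geq f$, which completes the argument.

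The whole proof is essentially bookkeeping once Theorem~\ref{foo3} is available, and I do not expect a genuine obstacle; the only point requiring a little care is the reduction to testing condition (2) only at the block boundaries, which is legitimate because inside each block the relevant side of each inequality is monotone in the position within the block, so the extremal (boundary) prefix is the binding one. If one wished to avoid relying on Theorem~\ref{foo3}, the same normal-form computation could be carried out against a direct induction on the length of a trajectory of $HAD_{k,INT}$ whose output has the shape $\{k\}^{*}\diamond^{*}\{k-1\}^{*}\diamond^{*}$, but using Theorem~\ref{foo3} keeps the argument short. The main effort is simply to set up the four families of prefix inequalities correctly and to identify which instance of each is binding.
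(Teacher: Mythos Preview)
Your argument is correct. Writing $w=k^{a}\diamond^{b}(k-1)^{f}\diamond^{g}$ and translating conditions (1), (2a), (2b) of Theorem~\ref{foo3} block by block does indeed give precisely $b+g=a+f$, $a\geq b$, $b\geq f$, and your change of variables $c=f$, $d=a-b$, $e=b-f$ is the right bijection with the parametrisation in the statement. The monotonicity remark justifying that only the boundary prefixes need to be checked is also fine.

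However, your approach is genuinely different from the paper's. The paper proves the lemma \emph{before} establishing Theorem~\ref{foo3}, and so it argues directly from the dynamics of $HAD_{k,INT}$: for the inclusion $S_{k}\subseteq\{\ldots\}$ it reasons about how many $(k-1)$'s can be created and which diamonds must accompany them, and for the reverse inclusion it exhibits an explicit trajectory producing each word $k^{c+d+e}\diamond^{c+e}(k-1)^{c}\diamond^{c+d}$ (first build $k^{e}\diamond^{e}$, then repeat $c$ times a two-step manoeuvre that creates one $(k-1)$ together with one diamond in each block, then append $d$ pairs $(k,\diamond)$). Your route is shorter and purely algebraic, but it relies on Theorem~\ref{foo3}; this is logically harmless since the proof of Theorem~\ref{foo3} nowhere uses the present lemma or Theorem~\ref{noncfl}, so there is no circularity --- it merely requires reordering the proofs. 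The paper's approach, by contrast, is self-contained at this point and has the side benefit of making the combinatorics of the trajectories explicit, which is useful intuition for the non-context-freeness argument that follows.
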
 
\begin{proof} 
The direct inclusion is fairly simple: let $w\in S_{k}$. define $c$ to be the number of letters $(k-1)$ in $w$. 
Since there are no diamonds in between the $(k-1)$'s, all such letters must have been produced by removing one lifeline each by some $k$'s. Thus the number of stars in between  the $k$'s and $(k-1)$'s is $c+e$, with $e$ being the number of pairs $(k,\diamond)$ that did not kill any particle that will eventually become a $k-1$. 

On the other hand the number of $k$'s is obtained by tallying up $c$ (for the $c$ letters that become $k-1$, needing one copy of $k$ each), $e$ (for the pairs $(k,\diamond)$ where $\diamond$ belongs to the first set of diamonds) and $d$ (for $d$ pairs $(k,\diamond)$ with $\diamond$ in the second set of diamonds). 

For the reverse implication we outline the following construction: 

\noindent
First we derive $k^{e}\diamond^{e}$. Then we repeat the following strategy $c$ times: 
\vspace{-3mm}
\begin{itemize} 
\item[-] We insert a $k$ at the beginning of the $k-1$ block (initially at the end of the word) and the corresponding $\diamond$ at the end of the word. 
\item[-] With one pair $k,\diamond$ (with $\diamond$ inserted in the first block) we turn the $k$ into a $k-1$. 
\end{itemize} 
\vspace{-2mm}
Finally we insert $k$ pairs $(k,\diamond)$, with $\diamond$ in the second block. 
\end{proof} 

The theorem now follows from the following 
\begin{lemma} 
$S_{k}$ is not a context-free language. 
\label{ogden}
\end{lemma}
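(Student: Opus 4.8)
The plan is to apply Ogden's lemma (the pumping lemma for context-free languages with marked positions), since the ordinary pumping lemma may not suffice to block all decompositions. We work with the explicit description of $S_k$ given by the preceding lemma, namely $S_{k}=\{ k^{c+d+e} \diamond^{c+e} (k-1)^{c}\diamond^{c+d}\mid c,d,e\geq 0\}$, which after projecting away $e$ and $d$ (taking $d=e=0$) contains the "diagonal" family $k^{c}\diamond^{c}(k-1)^{c}\diamond^{c}$, a classic non-context-free pattern. The core intuition is that a context-free grammar can only pump two contiguous blocks at once, whereas the count $c$ here controls \emph{four} separated blocks (the $k$'s, the first $\diamond$'s, the $(k-1)$'s, and the second $\diamond$'s), with the additional nuisance that $e$ ties the first $\diamond$-block to the $k$-block and $d$ ties the second $\diamond$-block to the $k$-block.

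The steps I would carry out are as follows. First, suppose for contradiction that $S_k$ is context-free, and let $N$ be the constant from Ogden's lemma. Pick the word $w = k^{N}\diamond^{N}(k-1)^{N}\diamond^{N}$ (i.e. $c=N$, $d=e=0$), and mark all $N$ occurrences of the letter $(k-1)$ as distinguished positions. Ogden's lemma yields a factorization $w=uvxyz$ with $vy$ containing at least one marked position, $vxy$ containing at most $N$ marked positions, and $uv^{i}xy^{i}z\in S_k$ for all $i\ge 0$. Second, I would analyze where $v$ and $y$ can sit. Since $vy$ contains a marked position, at least one of $v,y$ contains a $(k-1)$. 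Because $vxy$ spans at most $N$ marked positions and the $(k-1)$-block has length $N$, the string $vxy$ cannot reach both the $k$-block and the final $\diamond$-block; moreover $v$ and $y$ must each be a single-letter power (if $v$ or $y$ straddled a block boundary, $uv^{2}xy^{2}z$ would have letters out of the prescribed order $k^{*}\diamond^{*}(k-1)^{*}\diamond^{*}$, hence fail to lie in $S_k$). Third, I would do the short case check: whichever of the four blocks $v$ and $y$ pump, increasing $i$ changes the multiset of block-lengths in a way incompatible with the constraint set $\{$\#$k = c+d+e$, \#(first $\diamond$) $= c+e$, \#$(k-1)=c$, \#(last $\diamond$) $=c+d\}$. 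Concretely, since $vy$ touches the $(k-1)$-block, pumping strictly changes $c$; but $vxy$ is too short to simultaneously adjust the $k$-block and both $\diamond$-blocks to keep $c+d+e$, $c+e$, and $c+d$ consistent — at most one of the three "balancing" equations can be repaired, so at least one is violated for $i=2$ (or $i=0$). This contradiction establishes Lemma~\ref{ogden}, and with it Theorem~\ref{noncfl}, since context-free languages are closed under intersection with regular sets and $\{k\}^{*}\diamond^{*}\{k-1\}^{*}\diamond^{*}$ is regular.

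The main obstacle is the bookkeeping in the third step: because $e$ and $d$ are free parameters, a pumped word $uv^{i}xy^{i}z$ need not have the shape $k^{N'}\diamond^{N'}(k-1)^{N'}\diamond^{N'}$ — it only needs to satisfy the three linear relations among block lengths for \emph{some} nonnegative $c,d,e$. So the argument must show that no choice of $v,y$ consistent with Ogden's constraints can keep all three relations satisfiable as $i$ varies. The cleanest way to see this is to note that the relations force \#$k -$ \#(first $\diamond$) $= d \ge 0$ and \#$k -$ \#(last $\diamond$) $= e \ge 0$ and \#$(k-1) = c$ with \#(first $\diamond$) $\ge c$ and \#(last $\diamond$) $\ge c$; pumping the $(k-1)$-block up without correspondingly enlarging the $k$-block and both $\diamond$-blocks breaks \#(first $\diamond$) $\ge c$ or \#(last $\diamond$) $\ge c$, and $vxy$ being confined (by the marked-position bound) to a window that misses either the $k$-block or the last $\diamond$-block guarantees at least one such failure. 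I would present this as a brief two- or three-case dispatch rather than a full linear-algebra computation.

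\begin{proof}[Proof of Lemma~\ref{ogden}]
Suppose $S_k$ is context-free and let $N$ be the constant from Ogden's lemma. Consider $w=k^{N}\diamond^{N}(k-1)^{N}\diamond^{N}\in S_k$ (with $c=N$, $d=e=0$) and mark the $N$ occurrences of $(k-1)$. Ogden's lemma gives $w=uvxyz$ with $vy$ containing a marked position, $vxy$ containing at most $N$ marked positions, and $uv^{i}xy^{i}z\in S_k$ for all $i\ge0$.

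If $v$ or $y$ contained two distinct letters, then $uv^{2}xy^{2}z$ would violate the block order $k^{*}\diamond^{*}(k-1)^{*}\diamond^{*}$, contradicting membership in $S_k$; so $v$ and $y$ are each (possibly empty) powers of a single letter, and $vy$ is nonempty and contains a $(k-1)$. Hence for $i\ne1$ the number of $(k-1)$'s in $uv^{i}xy^{i}z$ differs from $N$; write this number as the new value of $c$. Since $vxy$ contains at most $N$ marked positions, it cannot contain both the first letter $k$ and the last letter $\diamond$ of $w$; thus $vxy$ is disjoint from the $k$-block or from the final $\diamond$-block. In either case, pumping ($i=2$) increases $c$ while leaving at least one of the $k$-block or the final $\diamond$-block unchanged, and also leaves unchanged at least one of the two $\diamond$-blocks or the $k$-block. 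But any word in $S_k$ must satisfy, for its block counts, $\#(\text{first }\diamond)\ge \#(k-1)$ and $\#(\text{last }\diamond)\ge \#(k-1)$ and $\#k\ge\#(k-1)$. For $i=2$ one of these inequalities fails (the relevant block was not enlarged while $\#(k-1)$ grew), so $uv^{2}xy^{2}z\notin S_k$, a contradiction. Therefore $S_k$ is not context-free.
\end{proof}

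Since $\{k\}^{*}\diamond^{*}\{k-1\}^{*}\diamond^{*}$ is regular and context-free languages are closed under intersection with regular languages, Lemma~\ref{ogden} shows $L_{H,INT}^{k}$ is not context-free, proving Theorem~\ref{noncfl}.
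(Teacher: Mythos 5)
Your proof is correct and follows essentially the same route as the paper's: apply Ogden's lemma to a word of the shape $k^{*}\diamond^{*}(k-1)^{*}\diamond^{*}$ with all occurrences of $(k-1)$ marked, force $v$ and $y$ to be single-letter powers (else pumping breaks the block order), note that the $(k-1)$-block is pumped while at least one of the remaining three blocks is not, and contradict the necessary block-count inequalities $\#(\mbox{first }\diamond)\geq \#(k-1)$, $\#(\mbox{last }\diamond)\geq \#(k-1)$, $\#k\geq \#(k-1)$ --- the only (cosmetic) difference being that the paper takes $c,d,e\geq p$ and lets $i$ grow large, whereas your choice $d=e=0$ gives the contradiction already at $i=2$. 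One justification should be repaired: since your word contains exactly $N$ marked positions in total, the Ogden condition that $vxy$ contains at most $N$ marked positions is vacuous and does not imply that $vxy$ misses the first $k$ or the last $\diamond$; however, the fact you actually need --- that at most one block other than the $(k-1)$-block is pumped, hence at least one of the $k$-block and the two $\diamond$-blocks keeps length $N$ --- already follows from $v,y$ being single-letter powers with one of them lying inside the $(k-1)$-block, so the argument stands once that sentence is rephrased.
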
 
\begin{proof} 
An easy application of Ogden's lemma: 
We take a string $s\in S_k$, 
\[
s=k^{c+d+e} \diamond^{c+e} (k-2)^{c}\diamond^{c+d}
\]
with $c,d,e\geq p$ (where $p$ is the parameter in Ogden's Lemma. We mark all positions of $k-1$. 
Then $s=uvwxy$, with $uv^iwx^iy \in S_{k}$ for all $i\geq 0$. The "pumping blocks" $v,x$ cannot consist of more than one type of symbols, otherwise the pumped strings would fail to be a member of $\{k\}^{*}\diamond^{*}\{k-1\}^{*}\diamond^{*}$. 

Therefore no more than two blocks (of the four in $s$) get pumped. One that definitely gets pumped is the 
first block of diamonds. Taking large enough $i$ we obtain a contradiction, since the block that fails to get pumped will eventually have smaller length than the (pumped) first block of diamonds. 
\end{proof}

\section{Proof of Theorem~\ref{foo3}}
\begin{proof} 
The inclusion $\subseteq$ is easy: given $w\in L_{H,INT}^{k}$, conditions 1. and 2 (a). hold, as the process $HAD_{INT}^{k}$ inserts a digit (more precisely a $k$) before every diamond. 

As for condition 2 (b)., each $\diamond$ takes at most one life of a particle.  The total number of lives particles in $p$ are endowed with at their moments of birth is $k(|p|-|p|_{\diamond})$. These lives are either preserved (and are counted by $s(p)$), or they are lost, in a move which (also) introduces a $\diamond$ in $p$. Thus $
k(|p|-|p|_{\diamond}) \leq |p|_{\diamond} + s(p),$
which is equivalent to b. 

The inclusion $\supseteq$ is proved by induction on $|w|$. What we have to prove is that every word that satisfies conditions 1-2 is an output of the process $HAD_{k,INT}$. 

The case $|w|=2$ is easy: the only word that satisfies conditions 1-2 is easily seen to be $w=k\diamond$, which can be generated in one move. 

Assume now that the induction hypothesis is true for all words of lengths strictly less than 2n, and let $w=w_{1}\ldots w_{2n}$ be a word of length $2n$ satisfying conditions 1-2. 

\begin{lemma} 
$w_{2n}=\diamond.$
\end{lemma}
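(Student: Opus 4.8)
The plan is to argue by contradiction: suppose $w_{2n} \neq \diamond$, which (since the alphabet is $\Sigma_k \cup \{\diamond\}$) means $w_{2n} \in \{0,1,\ldots,k\}$. I will show this forces a violation of one of conditions 1 or 2 on some prefix of $w$, contradicting the standing assumption that $w$ satisfies all of them. The natural prefix to inspect is $p = w_1 \ldots w_{2n-1}$, the word obtained by deleting the final letter. Since $w$ has exactly $n$ diamonds by condition 1 and the last letter is not a diamond, $p$ still has $|p|_\diamond = n$ diamonds while $|p| = 2n - 1$; hence $|p|_\diamond = n > (2n-1)/2 = |p|/2$, which directly contradicts condition 2(a) applied to the prefix $p$. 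So in fact the whole argument is quite short once one writes down the right prefix.

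The one thing worth being careful about is the degenerate boundary: I should note that $p = w_1\ldots w_{2n-1}$ is a genuine nonempty prefix of $w$ (this uses $n \geq 1$, i.e. $2n \geq 2$, which holds in the inductive step where $|w| = 2n$ with the base case $|w|=2$ already disposed of), so condition 2(a) is indeed asserted for it. There is no subtlety in condition 2(b) here; the contradiction is entirely with the counting inequality $|p|_\diamond \leq |p|/2$, equivalently $2|p|_\diamond \leq |p|$, i.e. $2n \leq 2n-1$, which is absurd.

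I do not anticipate a genuine obstacle in this lemma; it is a bookkeeping observation that isolates the fact that the last symbol of any even-length word satisfying the prefix-balance condition must be a diamond. The real work of the $\supseteq$ inclusion will come afterwards, in peeling off a $(k,\diamond)$ pair (or recognizing which earlier $k$ was "killed" to make the final $\diamond$) and checking that the shortened word of length $2n-2$ still satisfies conditions 1-2 so that the induction hypothesis applies; this lemma is merely the first, easy step of that reduction. Accordingly I would present it exactly as above: assume $w_{2n} \neq \diamond$, look at the prefix $p$ of length $2n-1$, count diamonds, and invoke condition 2(a) to reach the contradiction $2n \leq 2n-1$.
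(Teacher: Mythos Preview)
Your proposal is correct and essentially identical to the paper's own argument: both look at the prefix $p = w_1 \ldots w_{2n-1}$, apply condition 2(a) to get $|p|_{\diamond} \leq (2n-1)/2$, and combine this with $|w|_{\diamond} = n$ from condition 1 to force $w_{2n} = \diamond$. The only cosmetic difference is that the paper phrases it directly (concluding $|p|_{\diamond} \leq n-1$, hence the last letter supplies the $n$th diamond) rather than by contradiction.
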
 
\begin{proof} 
Let $p=w_{1}\ldots w_{2n-1}$. By condition 2(a)., $|p|_{\diamond}\leq (2n-1)/2$, hence $|p|_{\diamond}\leq n-1$. Since $|w|_{\diamond}=n$, the claim follows. 
\end{proof} 

\begin{lemma} 
$w_{1}=k$. 
\end{lemma}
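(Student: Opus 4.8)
The plan is to prove $w_1 = k$ by examining the first letter of a length-$2n$ word $w$ satisfying conditions 1--2 of Theorem~\ref{foo3}. Since we already know (by the previous lemma) that $w_{2n} = \diamond$, I want to argue that no choice other than $k$ for $w_1$ is consistent with condition 2(b) applied to short prefixes. First I would consider the prefix $p = w_1$ of length $1$. Condition 2(b) says $s(p) + (k+1)|p|_\diamond \geq k|p| = k$. If $w_1 = \diamond$ then $s(p) = 0$ and $|p|_\diamond = 1$, giving $k+1 \geq k$, which does hold — so condition 2(b) alone does not immediately rule out $w_1 = \diamond$. However, condition 2(a) applied to $p = w_1$ gives $|p|_\diamond \leq 1/2$, so $|p|_\diamond = 0$, which forces $w_1 \neq \diamond$. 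Thus $w_1 \in \Sigma_k$, i.e. $w_1$ is a digit in $\{0, 1, \ldots, k\}$.

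Next I would rule out $w_1 = j$ for $j < k$ using condition 2(b) on the one-letter prefix: $s(p) = j$ and $|p|_\diamond = 0$, so we need $j \geq k$, forcing $j = k$. This completes the argument, and it is in fact short enough that it is essentially just "read off condition 2 on the prefix of length one." So the proof is: apply 2(a) to get $w_1 \neq \diamond$, then apply 2(b) to the singleton prefix to get $s(w_1) \geq k$, which among digits in $\Sigma_k$ forces $w_1 = k$.

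I expect there is no real obstacle here; the only subtlety is making sure the definition of $s(\cdot)$ (the sum of the digit values, i.e. $s(p) = \sum_{a \in \Sigma_k} a \cdot |p|_a$, consistent with "the structural difference"/"total surviving lives" reading used earlier) is applied correctly to a one-letter word, and confirming that conditions 1--2 are hereditary in the sense that they apply to all prefixes — which is exactly how 2 is stated. So I would simply write: "By condition 2(a) applied to the prefix $w_1$, $|w_1|_\diamond \le 1/2$, hence $w_1 \ne \diamond$. By condition 2(b) applied to the same prefix, $s(w_1) + 0 \ge k$, so the digit $w_1$ has value at least $k$; since $w_1 \in \Sigma_k = \{0,\dots,k\}$, this forces $w_1 = k$."

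After these two lemmas ($w_{2n} = \diamond$ and $w_1 = k$) the induction in the proof of Theorem~\ref{foo3} presumably proceeds by stripping the final $\diamond$ and the $k$ it "corresponds to" (the one whose insertion paired with that diamond), applying the induction hypothesis to the resulting length-$(2n-2)$ word after checking conditions 1--2 are preserved, and then reinserting the $(k, \diamond)$ pair — possibly also accounting for the life removed by that diamond. The genuinely delicate part of the overall argument (not of the present lemma) will be identifying which $k$ to pair with the last $\diamond$ and verifying that the shortened word still satisfies 2(b) for all its prefixes; but that lies beyond the statement we are asked to prove here.
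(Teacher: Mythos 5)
Your proof is correct and matches the paper's argument exactly: the paper likewise applies condition 2(a) to the one-letter prefix to force $|w_{1}|_{\diamond}=0$ (so $w_{1}$ is a digit) and then condition 2(b) to get $s(w_{1})\geq k|w_{1}|=k$, which forces $w_{1}=k$. No differences worth noting.
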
 
\begin{proof} 
Let $q=w_{1}$. Since $|q|_{\diamond}\leq 1/2$, $w_{1}$ must be a digit. Since $s(q)\geq k|q|=k$, the claim follows. 
\end{proof} 

Let now $r$ be the largest index such that $w_{r}=k$. Let $s$ be the leftmost position $s>r$ such that $w_{s}=\diamond$. Let $t$ be the leftmost position $t>s$ such that $w_{t}\neq \diamond$, $t=2n+1$ if no such position exists. 

Consider the word $b$ obtained from $w$ by a. deleting positions $w_{r}$ and $w_{s}$. b. increasing the digit at position $w_{t}$ by one, if $t\neq 2n+1$. Note that, if $t\neq 2n+1$ then 
$w_{t}\neq k$, by the definition of index $r$. Also, $|b|=2n-2<2n$. 

$w$ is easily obtained from $b$ by inserting a $k$ in position $r$ and a diamond in position $s$, also deleting one lifeline 
from position $t$ if $t\neq 2n+1$. To complete the proof we need to argue that $b$ satisfies conditions 1-2a.b. 
Then, by induction, $b$ is an output of the process $HAD_{k,INT}$, hence so is $w$. 

Condition 1 is easy to check, since $|b|=2n-2$, and $b$ has exactly one $\diamond$ less than $w$, i.e. n-1 $\diamond$'s. As for 2.(a)-(b), let $p$ be a prefix of $b$. There are four cases: 

\begin{itemize} 
\item[-]\textbf{Case 1: $1\leq |p|< r$: } In this case $p$ is also a prefix of $w$, and the result follows from the inductive hypothesis. 
\item[-]\textbf{Case 2: $r\leq |p|< s-1$: } In this case $p=w_{1}\ldots w_{r-1}w_{r+1}\ldots w_{|p|+1}$. Let $z_{1}=w_{1}\ldots \ldots w_{|p|+1}$ be the corresponding prefix of $w$. 

The number of diamonds in $p$ is equal to the number of diamonds in $z_{1}$. Since $z_{1}$ does \textbf{not} end with a diamond (as $|p|<s-1$), the number of diamonds in $z_{1}$ is equal to that of its prefix $u$ of length $|p|$. By the 
induction hypothesis $|p|_{\diamond}=|u|_{\diamond}\leq |u|/2=|p|/2.$ So condition 2 (a). holds. On the other hand $s(p) + (k + 1)|p|_{\diamond} = (s(z_{1}) - k) + (k + 1)|z_{1}|_{\diamond} \geq k|z_{1}| -k = k|p|$, so 2 (b). holds as well. 

\item[-]\textbf{Case 3: $s-1\leq |p|< t-2$: } Thus $p=w_{1}\ldots w_{r-1}w_{r+1}\ldots w_{s-1}w_{s+1}\ldots w_{|p|+2}$. Let $z_{2}=w_{1}\ldots \ldots w_{|p|+2}$ be the corresponding prefix of $w$ of length $|p|+2$ and $z_{3}$ the prefix of $w$ of length $s-1$. 

The number of diamonds in $p$ is equal to the number of diamonds in $z_{2}$ minus one. By the induction hypothesis, 
this is at most $|z_{2}|/2-1$, which is at most $(|p|+2)/2-1=|p|/2$. Thus condition 2 (a). holds. Now $s(p)+ (k + 1)|p|_{\diamond} = $
\vspace{-2mm}
\begin{align*}
& (s(z_{2})-k) + (k + 1)(|z_{2}|_{\diamond} - 1) = (s(z_{3})-k)+(k+1)(|z_{3}|_{\diamond}+|z_{2}|-|z_{3}|-1) \\
 & \geq k|z_{3}|-k+(k+1)(|p|+2-|z_{3}|-1) = (k+1)(|p| + 1)-|z_{3}|-k= \\ & = (k+1)|p|-|z_{3}|+1  > k |p|+1+(|p|-|z_{3}|)> k|p|
\end{align*}

so condition 2 (b). is established as well. In the previous chain of (in)equalities we used the fact (valid by the very definition of $t$) that for all $s\leq i<t$, $w_{i}=\diamond.$

\item[-]\textbf{Case 4: $t-2\leq |p|\leq 2n$: }] In this case $p=w_{1}\ldots w_{r-1}w_{r+1}\ldots w_{s-1}w_{s+1}\ldots (w_{t}+1)\ldots $. Furthermore, $p$ ends with $w_{|p|+2}$ (if $|p|+2\neq t$) and with $w_{|p|+2}+1$  (if $|p|+2= t$). Let $z_{4}=w_{1}\ldots \ldots w_{|p|+2}$ be the prefix of $w$ of length $|p|+2$. 
\begin{itemize}
  \item[-] $|p|_{\diamond} = |z_{4}| - 1 \leq  |z_{4}|/ 2 - 1 = (|p|+2) / 2-1 = |p|/2.$
  \item[-] On the other hand $
 s(p) + (k + 1)|p|_{\diamond} = (s(z_{4})) - k + 1) + (k + 1)(|z_{4}|-1) \geq k|z_{4}| -k+1-k-1 = k \cdot (|p|+2)-2k=k|p|.$
\end{itemize}
so conditions 2 (a)-(b). are proved in this last case as well. 
\end{itemize} 
\end{proof} 

\section{Proof of Theorem~\ref{foo4}} 
\label{sec:pseries}

Justifying correctness of algorithm ComputeMultiplicity is simple: a string $w$ can result from any string $z$ by inserting a $k$ and deleting one life from the closest non-zero letter of $z$ to its right. After insertion, the new $k$ will be the rightmost element of a maximal block of $w$ of consecutive $k$'s. The letter it acts upon in $z$ cannot be a $k$ (in $w$), and cannot have any letters other than zero before it.

The candidates in $w$ for the changed letter are those letters $l$ succeeding the newly inserted $k$ such that $0\leq l\leq k-1$ and 
the only values between $k$ and $l$ are zeros. Thus these candidates are the following: (a)letters in $w$ forming the maximal block $B$ of zeros immediately following $k$ (if any), and (b).the first letter after $B$, provided it has value $0$ to $k-1$. Since we are counting multiplicities and all these words lead to distinct candidates, the correctness of the algorithm follows.

For $k=1$ the algorithm ComputeMultiplicity simplifies to a recurrence formula: Indeed, in this case there are no candidates of type (b). We derive:\\
$F_{1}([a_{1},b_{1},\ldots, a_{s},b_{s}])  = \sum_{\stackrel{i=1:}{ a_{i}> 1}}^{s} \sum_{\stackrel{j,l\geq 0}{j+1+l=b_{i}}} F_{1}([a_{1},\ldots, $ $a_{i}-1, j,1,l ,a_{i+1},\ldots,b_{s}]) + \sum_{\stackrel{i=1:}{ a_{i}= 1}}^{s} \sum_{\stackrel{j,l\geq 0}{j+1+l=b_{i}}} F_{1}([a_{1},\ldots, a_{i-1}, b_{i-1}+j,1,l ,a_{i+1}, $ $\ldots,  b_{s}])$ if $b_{s}>0,$ otherwise $
 F_{1}([a_{1},\ldots, a_{s},0]) = \sum_{\stackrel{i=1:}{ a_{i}> 1}}^{s-1} 
\sum_{\stackrel{j,l\geq 0}{j+1+l=b_{i}}} F_{1}([a_{1},$ $\ldots, a_{i}-1, j,1,l ,a_{i+1},\ldots,  a_{s},0]) + \sum_{\stackrel{i=1:}{ a_{i}= 1}}^{s-1} \sum_{\stackrel{j,l\geq 0}{j+1+l=b_{i}}} F_{1}([a_{1},\ldots, a_{i-1}, b_{i-1}+j,1,l, $ $a_{i+1},\ldots,  b_{s}])+ F_{1}([a_{1},\ldots,  a_{s}-1,0]).$

{\small 
\begin{figure}[ht]
\begin{minipage}{.50\textwidth}
\begin{tabular}{|c|c|c|c|c|c|}
\hline 
w  & 1 & 10 & 11 & 100 & 101   \\
\hline 
$F_{1}(w)$  &  1 & 1 & 1 & 1 & 2   \\ 
\hline 
w &  110 & 111 & 1000 & 1001 & 1010 \\ 
\hline 
$F_{1}(w)$ & 2 & 1 & 1 & 3 & 5 \\ 
\hline
w & 1011 & 1100 & 1101 & 1110 & 1111 \\ \hline
$F_{1}(w)$ & 3 & 5 & 3 & 3 &  1 \\
\hline 
\end{tabular}
\end{minipage}
\begin{minipage}{.45\textwidth}
 \begin{tabular}{|c|c|c|c|c|c|c|c|}
\hline 
w & 2 & 21 & 22 & 211 & 212 & 220 & 221 \\
\hline 
$F_{2}(w)$  & 1 & 1 & 1 & 1 & 2 & 1 & 1  \\ 
\hline 
w & 222 & 2111 & 2112 & 2120 & 2121 & 2122 & 2201  \\ 
\hline 
$F_{2}(w)$ & 1 & 1 & 3 & 2 & 3 & 3 & 1  \\ 
\hline 
w & 2202 & 2210 & 2211 & 2212 & 2220 & 2221 & 2222   \\ 
\hline 
$F_{2}(w)$ & 3 & 1 & 1 & 2 & 2  & 1 &  1   \\ 
\hline
\end{tabular}
\end{minipage}

\caption{The leading coefficients of formal power series (a). $F_1$. (b). $F_{2}$. }
\label{coeffs}
\end{figure}}

In spite of this, we weren't able to solve the recurrence above and compute the generating functions $F_{1}$ or, more generally, $F_{k}$, for $k\geq 1$.  An inspection of the coefficients obtained by the application of the algorithm is inconclusive: We tabulated the leading coefficients of series $F_{1}$ and $F_{2}$, computed using the Algorithm~\ref{algo} in Figures~\ref{coeffs} (a). and (b). The second listing is restricted to 2-dominant strings only.  No apparent closed-form formula for the coefficients of $F_{1},F_{2}$ emerges by inspecting these values.

\section{Application: estimating the value of the scaling constant $\lambda_{2}$.} 
\label{sec:apps} 
The computation of series $F_{2}$ allows us to tabulate (for small value of $n$) the values of the distribution of increments, a structural parameter whose limiting behavior determines the value of the constant $\lambda_{2}$ (conjectured, remember, to be equal to $\frac{1+\sqrt{5}}{2}$). 

\begin{definition} 
Let $w$ be a word that is an outcome of the process $HAD_{k}$. An {\em increment of $w$} is a position $p$ in $w$ (among the $|w|+1$ possible positions: at the beginning of $w$, at the end of $w$ or between two letters of $w$) such that no nonzero letters of $w$ appear to the right of $p$. The number of increments of word $w$ is denoted by $\#inc_{k}(w)$.  It is nothing but 1 plus the number of trailing zeros of $w$. 

 Let $L$ be an alphabet that contains $\Sigma_{k}$ for some $k\geq 1$. Given a word $w\in L^{*}$ we denote by $s(w)$ the sum of the digit characters of $w$. 
\end{definition}

The fact that increments  are useful in computing $\lambda_2$ is seen as follows: consider a word $w$ of length $n$ that is a sample from the $HAD_{k}$ process. Increments of $w$ are those positions where the insertion of a $k$ does not remove any lifeline, thus increasing the number of heaps in the corresponding greedy "patience heaping" algorithm \cite{istrate2015heapable} by 1. If $w$ has $t$ increment positions then the probability that the number of heaps will increase by one (given that the current state of the process is $w$) is $t/(n+1)$.

What we need to show is that (as $n\rightarrow \infty$) the mean number of positions that are increments in a random sample $w$ of length $n$ tends to $\lambda_{k}$. Therefore the probability that a new position will increase the number of heaps by 1 is asymptotically equal to $\lambda_{k}/(n+1)$. The scaling of the expected number of heaps follows from this limit.

In Figure~\ref{incs}. we plot the {\em exact} probability distribution of the number of increments (from which we subtract one, to make the distribution start from zero) for $k=2$ and several small values of $n$. They were computed exactly by employing Algorithm~\ref{algo} to exactly compute the probability of each string $w$, and then computing $\# inc_{2}(w)$. We performed this computation for $2\leq n\leq 13$. The corresponding expected values are tabulated (for all values $n=2,\ldots, 10$) in Figure~\ref{incs2}. 

\begin{figure}[ht]
\begin{center} 
\includegraphics[height=4cm, width=6cm]{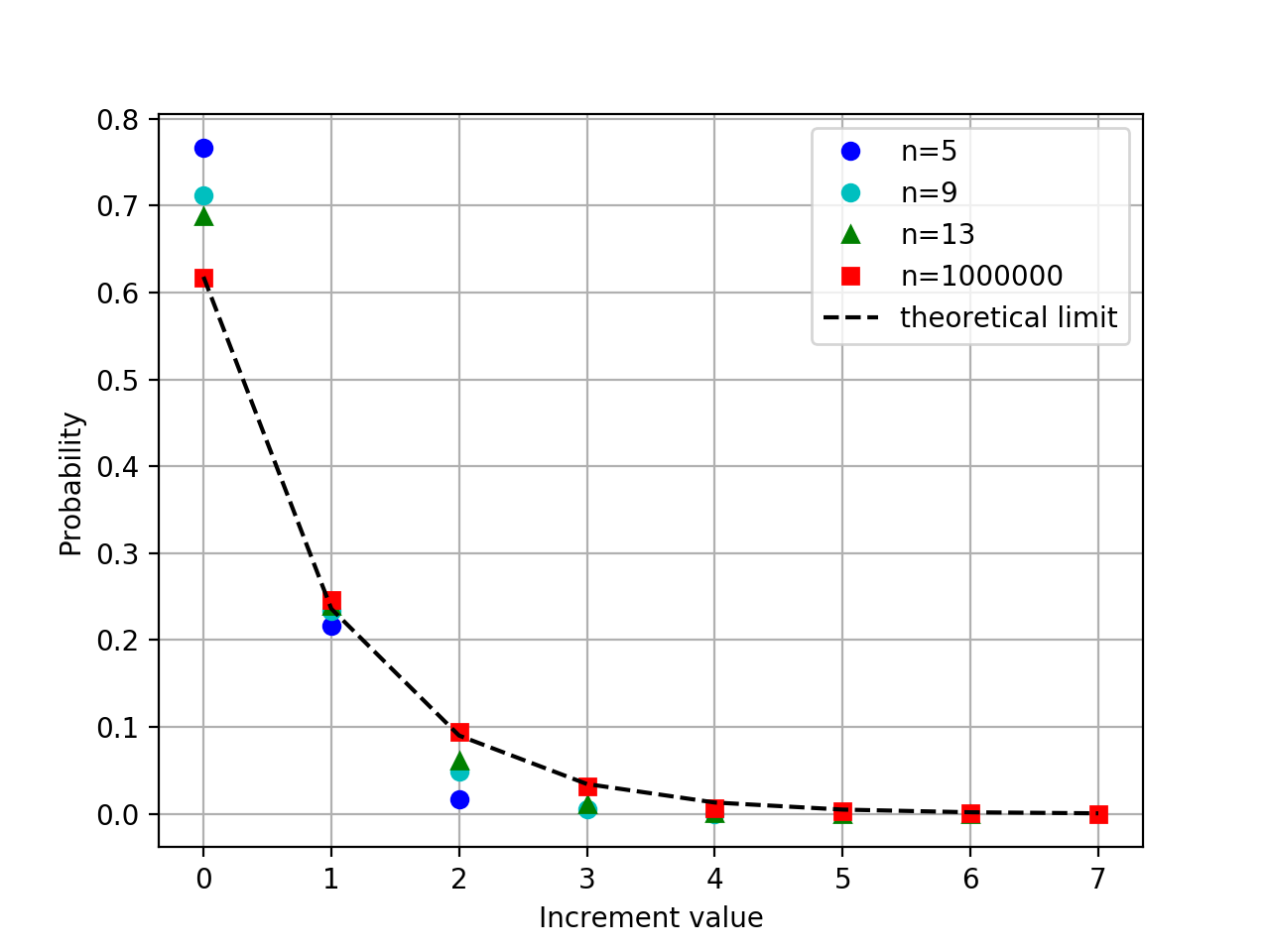}
\end{center} 
\caption{Probability distribution of increments, for $k=2$, and $n=5,9,13,1000000$.}
\label{incs}
\end{figure} 

\begin{figure}[ht]
\begin{center} 
\begin{tabular}{|c|c|c|c|c|c|c|c|}
\hline 
n & 2 & 3 & 4 & 5 & 6 & 7 \\ 
\hline 
$E[\#inc_{2}]$ & 1.0 & 1.166 & 1.208 & 1.250 & 1.281 & 1.307  \\ 
\hline 
n  & 8 & 9 & 10 & 100  & 100000 & 1 mil \\ 
\hline 
$E[\#inc_{2}]$& 1.329 & 1.347 & 1.363 & 1.520  & 1.575 & 1.580  \\ 
\hline 
\end{tabular} 
\end{center} 
\caption{The mean values of the distributions of increments.}
\label{incs2}
\end{figure} 

Unfortunately, as it turns out, the ability to exactly compute (for small values of $n$) the distribution of increments \textbf{does not} give an accurate estimate of the asymptotic behavior of this distribution, as the convergence seems rather slow, and not at all captured by these small values of $n$.  Indeed, to explore the distribution of increments for large values of $n$, as exact computation is no longer possible, we instead resorted to {\em sampling} from the distribution, by generating 10000 independent random trajectories of length $n$ from process $HAD_{2}$,  and then computing the distribution of increments of the sampled outcome strings. The outcome is presented  (for $n=100,100000,1000000$, together with some of the cases of the exact distribution) in Figure~\ref{incs}. The distribution of increments seems to converge (as $n\rightarrow \infty$) to a geometric distribution with parameter $p=\frac{\sqrt{5}-1}{2}\sim 0.618\cdots$. That is, we predict that for all $i\geq 1$, $ \lim_{n\rightarrow \infty}Pr_{|w|=n}[\# inc_{2}(w)=i]=p\cdot (1-p)^{i-1}.$
The fit between the (sampled) estimates for $n=1000000$ and the predicted limit distribution is quite good: every coefficient differs from its predicted value by no more than $0.003$, with the exception of the fourth coefficient, whose difference is $0.007$. Because of the formula for computing averages, these small differences have, though, a cumulative effect in the discrepancy for the average $E[\#inc_{2}(w)]$ for $n=10000000$ accounting for the $0.03$ difference between the sampled value and the predicted limit: in fact most of the difference is due to the fourth coefficient, as $4\times 0.007 = 0.028$. 
\begin{conclusion} 
The increment data supports the conjectured value $\lambda_2=1+p=\frac{1+\sqrt{5}}{2}.$
\end{conclusion} 
We intend to present (in the journal version of this paper) a similar investigation  of the value of constant $c_{k}$ in Conjecture~\ref{conj:int}.

\section{Open questions and future work}
\label{sec:open}

The major open problems raised by our work concerns the nature and asymptotic behavior of formal power series $F_{k}$, $F_{k,INT}$. An easy consequence of Corollary~\ref{complexity} is 

\begin{corollary} 
For $k\geq 2$ formal power series  $F_{k}$, $F_{k,INT}$ are {\bf not} {\bf N}-rational. 
\end{corollary}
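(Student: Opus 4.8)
The plan is to reduce the statement to a single classical fact from the theory of formal power series: \emph{the support of an $\mathbf{N}$-rational (noncommutative) formal power series is a regular language}. This is the precise point at which working over $\mathbf{N}$ rather than over $\mathbf{Z}$ or a field matters. An $\mathbf{N}$-rational series is recognized by a weighted automaton with weights in $\mathbf{N}$; since $\mathbf{N}$ has no additive inverses, no cancellation can occur when summing the weights of accepting computations, so a word $w$ receives a nonzero coefficient exactly when some run of the underlying unweighted nondeterministic automaton accepts $w$. Consequently the support is accepted by that NFA, hence is regular. (By contrast, over $\mathbf{Z}$ or a field the support of a rational series need not be regular, by Sch\"utzenberger; that is why the corollary is specific to $\mathbf{N}$-rationality.) I would record this no-cancellation argument explicitly, or alternatively cite it from \cite{berstel2011noncommutative,salomaa2012automata}.

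Granting this, both assertions follow with no further work. By definition the support of $F_{k}$ is exactly the Hammersley language $L_{H}^{k}$; if $F_{k}$ were $\mathbf{N}$-rational for some $k\geq 2$, then $L_{H}^{k}$ would be regular, contradicting Corollary~\ref{complexity}, which shows $L_{H}^{k}$ is not regular for $k\geq 2$. Similarly, the support of $F_{k,INT}$ is $L_{H,INT}^{k}$, the set of words generated by $HAD_{k,INT}$; were $F_{k,INT}$ $\mathbf{N}$-rational, $L_{H,INT}^{k}$ would be regular, contradicting Theorem~\ref{noncfl}, which shows it is not even context-free. Hence neither series is $\mathbf{N}$-rational for $k\geq 2$.

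The only step requiring care is the justification of the support-is-regular claim, and even there the argument is the short one sketched above; all the genuine combinatorial content has already been carried out in Theorem~\ref{main}, Corollary~\ref{complexity} and Theorem~\ref{noncfl}. I do not anticipate any real obstacle; the corollary is essentially a bookkeeping consequence of the language-theoretic characterizations, and the write-up should be a paragraph at most.
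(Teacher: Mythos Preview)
Your proposal is correct and matches the paper's approach: the paper presents this corollary as an ``easy consequence of Corollary~\ref{complexity}'' without further detail, and your argument---that the support of an $\mathbf{N}$-rational series is regular, while $L_H^k$ (resp.\ $L_{H,INT}^k$) is not regular by Corollary~\ref{complexity} (resp.\ Theorem~\ref{noncfl})---is exactly the intended one-line justification. If anything, you are slightly more explicit than the paper in singling out Theorem~\ref{noncfl} for the interval case.
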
 

\begin{open} 
Are formal power series $F_{1}$, $F_{1,INT}$ {\bf N}-rational ? (We conjecture that the answer is negative).
\end{open}

Note that Reutenauer \cite{reutenauer1980series} extended the Chomsky-Sch\"utzenberger criterion for rationality from formal languages to power series: a formal power series is rational if and only if the so-called {\it syntactic algebra} associated to it has finite rank.  We don't know, though, how to explicitly apply this result to the formal power series we investigate in this paper.  On the other hand, in the general case, the characterization of context-free languages as supports of {\bf N}-algebraic series (e.g. Theorem 5 in ~\cite{petre2009algebraic} ), together with Theorem~\ref{noncfl}, establishes the fact that series $F_{k,INT}$ is not {\bf N}-algebraic. 

\begin{open} 
Are series $F_{k}$ {\bf N}-algebraic ? (Conjecture: the answer is negative).

\end{open}
\vspace{-0.5cm}

 
\nocite{*}
 {\scriptsize
 \bibliographystyle{abbrv}
 \bibliography{sample-dmtcs}
 
}
\end{document}